\documentclass[twocolumn,numbers]{elsarticle}
\usepackage{fullpage}
\usepackage{amsthm,amssymb}

\journal{}

\newtheorem{theorem}{Theorem}

\newtheorem{proposition}{Proposition}

\newtheorem{lemma}{Lemma}

\usepackage{xcolor}
\usepackage[dvipsnames]{xcolor}

\newcommand{\cA}{\mathcal{A}}

\newcommand{\cR}{\mathcal{R}}
\newcommand{\cD}{\mathcal{D}}
\newcommand{\cP}{\mathcal{P}}
\newcommand{\duet}[2]{\overline{#1\,\, #2}}

\usepackage{tikz-cd}
\usetikzlibrary{fit}
\def\FigOneScale{0.4}
\def\FigTwoScale{0.5}
\def\FigThreeScale{0.425}
\def\FigFourScale{0.3}

\begin{document}

\begin{frontmatter}

\title{Efficient Reconstruction of Arboreal Networks}

\author[uea]{Katharina T. Huber} 
\author[cuny,amnh]{Katherine St.~John}

\affiliation[uea]{
    organization={School of Computing Sciences, University of East Anglia},
    city = {Norwich}, 
    country = {UK}.}

\affiliation[cuny]{
    organization={Department of Computer Science, Hunter College and Graduate Center, City University of New York},
    city = {New York}, 
    state = {NY}, 
    country = {USA.}}

\affiliation[amnh]{
    organization={Division of Invertebrate Zoology, American Museum of Natural History},
    city = {New York}, 
    state = {NY}, 
    country = {USA.}}

\end{frontmatter}

\section*{Abstract}
Arboreal networks are multi-rooted phylogenetic networks whose underlying graph is a tree.
We give an encoding 
of stack-free arboreal networks in terms of triplets and the novel concept of a duet.
This 
yields a polynomial time algorithm to construct these networks from complete triplet and duet systems.
The classification results show correctness and lead to a natural metric on these multi-rooted networks.

\noindent
{\em Keywords:  arboreal phylogenetic network, triplet \& duet systems, reconstruction algorithms, metrics.}

\section{Introduction}

{An improved understanding of the complex processes that drive molecular evolution has led to extensive interest in phylogenetic networks to represent them. Recently introduced classes of multi-rooted (phylogenetic) networks include forest-based networks \cite{huber2022forest} and arboreal networks \cite{HMS24}.  Both extend the notion of the popular tree-based networks \cite{FS15} and may be useful for studying lateral gene transfer between bacteria living in ecologically distinct niches such as the hand and the human gut \cite{jeong2019}. Both are multi-rooted directed acyclic graphs whose leaf set is a fixed set of taxa, and the study of their combinatorial structure has revealed intriguing links
with, for example, Ptolemaic graphs \cite{HMS24}. 
The computational complexity of recognizing forest-based networks was studied in \cite{SPTMH19} under various scenarios. 
Their class 
encompasses the class of Overlaid Species Forests which have been used to shed light into 
introgression in Heliconius butterflies in \cite{huber2025network}.} 

We focus on arboreal networks, an intriguing class of multi-rooted networks whose underlying structure is a tree.
We give a new combinatorial classification for 
arboreal networks 
which yields a natural metric on the duets (rooted 2-leaf subtrees) and triplets (rooted 3-leaf {binary} subtrees) compatible with arboreal networks.  
Using this characterization, we give a polynomial-time algorithm to reconstruct an arboreal network, if it exists, from the complete triplet and duet systems induced by the network.
Allowing multiple roots in a network adds power to capture biological relationships, but, as we shall see, it also adds complexity.  
Our approach relies on the {\sc Build} algorithm \cite{aho81}.  While the {\sc Build} algorithm cannot directly reconstruct multi-rooted networks, 
a pre-processing step, based on duets, and additional clustering yields an efficient algorithm that matches {the} running time bounds of the {\sc Build} algorithm.

\section{Background}

We follow the notation of \cite{huber2022forest,huber2025network}, unless otherwise noted.
Throughout, we assume that the leaf-label set is a finite set $X$ with $|X| \ge 3$. 
Let $G$ be a directed simple graph with vertex set $V(G)$ and arc (directed edge) set $E(G)$.
For any multi-rooted directed acyclic graph ({\bf mDAG}) $G$, we call a vertex of $G$ 
with in-degree 0 and out-degree at least 2 a {\bf root} of $G$ and a vertex
with in-degree 1 and out-degree zero {a {\bf leaf}} of $G$.  We denote the set of leaves of $G$ by $L(G)$ and call every vertex of $G$ that is neither a root nor a leaf an {\bf interior vertex} of $G$.  
If all of the outward arcs of an {interior} vertex of $G$ are adjacent to leaves, we call that set of leaves a {\bf generalized cherry}. If a generalized cherry has size two, then we call it a {\bf cherry}.
We say that a vertex $v$ of $G$ is {\bf above} a vertex $u$ of $G$ if there exists a root $\rho$ of $G$ such that {a} path from $\rho$ to $u$ crosses $v$, and we say {that} $u$ is {\bf strictly above} $v$ if $u$ is above $v$ and $u\not=v$. We say that $u$ is {\bf below} $v$ if $v$ is above $u$. We let $U(G)$ denote the graph that results from $G$ by ignoring the directions of its arcs.
We call the graph, $U(G)^-$, obtained from  $U(G)$ by suppressing all vertices of degree 2, the {\bf underlying graph} of $G$ (see Fig.~\ref{fig:not-encoded}(e,f)).
An unrooted tree $T$ 
is an {\bf unrooted phylogenetic tree} 
if $T$ contains no vertices of degree 2.

\begin{figure*}[t]
\begin{center}
{\scriptsize
\begin{tabular}{cccccc}
    \begin{tikzpicture}[thick,scale=\FigOneScale]
    \node[align=left] at (1,2.5) {\small $N^a$};
    \node[outer sep=2.5pt, fill=black,circle,inner sep=1pt, label=below: {$5$} ] at (1,1){};
    \node[outer sep=2.5pt, fill=black,circle,inner sep=1pt, label=below: {$1$} ] at (2,1){};
    \node[outer sep=2.5pt, fill=black,circle,inner sep=1pt, label=below: {$2$} ] at (3,1){};
    \node[outer sep=2.5pt, fill=black,circle,inner sep=1pt, label=below: {$3$} ] at (4,1){};
    \node[outer sep=2.5pt, fill=black,circle,inner sep=1pt, label=below: {$4$} ] at (5,1){};
    \node[fill=black,circle,inner sep=1pt] at (3,1.5){};
    \node[fill=black,circle,inner sep=1pt] at (4,1.5){};
    \node[fill=black,circle,inner sep=1pt] at (3,2){};
    \node[fill=black,circle,inner sep=1pt] at (4,2){};
    \node[fill=black,circle,inner sep=1pt] at (2.5,2.25){};
    \node[fill=black,circle,inner sep=1pt] at (2,2.5){};
    \node[fill=black,circle,inner sep=1pt] at (4.5,2.5){};
    \draw(3,1)--(3,2);
    \draw(4,1)--(4,2);
    \draw(3,1.5)--(4,2);
    \draw(4,1.5)--(3,2);
    \draw(2.5,2.25)--(3,2);
    \draw(2,2.5)--(2.5,2.25);
    \draw(2,2.5)--(1,1);
    \draw(2,1)--(2.5,2.25);
    \draw(5,1)--(4.5,2.5);
    \draw(4,2)--(4.5,2.5);
\end{tikzpicture}
    &\begin{tikzpicture}[thick,scale=\FigOneScale]
    \node[align=left] at (1,2.5) {\small $N^b$};
    \node[outer sep=2.5pt, fill=black,circle,inner sep=1pt, label=below: {$5$} ] at (1,1){};
    \node[outer sep=2.5pt, fill=black,circle,inner sep=1pt, label=below: {$1$} ] at (2,1){};
    \node[outer sep=2.5pt, fill=black,circle,inner sep=1pt, label=below: {$2$} ] at (3,1){};
    \node[outer sep=2.5pt, fill=black,circle,inner sep=1pt, label=below: {$3$} ] at (4,1){};
    \node[outer sep=2.5pt, fill=black,circle,inner sep=1pt, label=below: {$4$} ] at (5,1){};
    \node[fill=black,circle,inner sep=1pt] at (3.5,1.5){};
    \node[fill=black,circle,inner sep=1pt] at (3.125,1.75){};
    \node[fill=black,circle,inner sep=1pt] at (2.75,2){};
    \node[fill=black,circle,inner sep=1pt] at (2,2.5){};
    \node[fill=black,circle,inner sep=1pt] at (4,2.25){};
    \draw(3,1)--(3.5,1.5);
    \draw(4,1)--(3.5,1.5);
    \draw(2,2.5)--(2.75,2);
    \draw(2,2.5)--(1,1);
    \draw(2,1)--(2.75,2);
    \draw(3.5,1.5)--(2.75,2);
    \draw(5,1)--(4,2.25);
    \draw(3.125,1.75)--(4,2.25);
\end{tikzpicture}
    &\begin{tikzpicture}[thick,scale=\FigOneScale]
    \node[align=left] at (1,2.5) {\small $N^c$};
    \node[outer sep=2.5pt, fill=black,circle,inner sep=1pt, label=below: {$1$} ] at (1,1){};
    \node[outer sep=2.5pt, fill=black,circle,inner sep=1pt, label=below: {$2$} ] at (2,1){};
    \node[outer sep=2.5pt, fill=black,circle,inner sep=1pt, label=below: {$3$} ] at (3,1){};
    \node[outer sep=2.5pt, fill=black,circle,inner sep=1pt, label=below: {$4$} ] at (4,1){};
    \node[outer sep=2.5pt, fill=black,circle,inner sep=1pt, label=below: {$5$} ] at (5,1){};
    \node[outer sep=2.5pt, fill=black,circle,inner sep=1pt, label=below: {$6$} ] at (6,1){};
    \node[fill=black,circle,inner sep=1pt] at (1.5,2){};
    \node[fill=black,circle,inner sep=1pt] at (3.5,2){};
    \node[fill=black,circle,inner sep=1pt] at (4,1.5){};
    \node[fill=black,circle,inner sep=1pt] at (5.5,2){};
    \node[fill=black,circle,inner sep=1pt] at (5,2.5){};
    \node[fill=black,circle,inner sep=1pt] at (2.5,2.5){};
    \draw(2.5,2.5)--(1.5,2);
        \draw(1.5,2)--(1,1);
        \draw(1.5,2)--(2,1);
    \draw(2.5,2.5)--(3.5,2);
        \draw(3.5,2)--(3,1);
        \draw(3.5,2)--(4,1.5);
        \draw(4,1)--(4,1.5);
    \draw(5,2.5)--(4,1.5);
    \draw(5,2.5)--(5.5,2);
        \draw(5.5,2)--(5,1);
        \draw(5.5,2)--(6,1);
\end{tikzpicture}
    &\begin{tikzpicture}[thick,scale=\FigOneScale]
    \node[align=left] at (1,2.5) {\small $N^d$};
    \node[outer sep=2.5pt, fill=black,circle,inner sep=1pt, label=below: {$1$} ] at (1,1){};
    \node[outer sep=2.5pt, fill=black,circle,inner sep=1pt, label=below: {$2$} ] at (2,1){};
    \node[outer sep=2.5pt, fill=black,circle,inner sep=1pt, label=below: {$3$} ] at (3,1){};
    \node[outer sep=2.5pt, fill=black,circle,inner sep=1pt, label=below: {$4$} ] at (4,1){};
    \node[outer sep=2.5pt, fill=black,circle,inner sep=1pt, label=below: {$5$} ] at (5,1){};
    \node[outer sep=2.5pt, fill=black,circle,inner sep=1pt, label=below: {$6$} ] at (6,1){};
    \node[fill=black,circle,inner sep=1pt] at (1.5,2){};
    \node[fill=black,circle,inner sep=1pt] at (3,1.5){};
    \node[fill=black,circle,inner sep=1pt] at (3.5,2){};
    \node[fill=black,circle,inner sep=1pt] at (5.5,2){};
    \node[fill=black,circle,inner sep=1pt] at (4.5,2.5){};
    \node[fill=black,circle,inner sep=1pt] at (2,2.5){};
    \draw(2,2.5)--(1.5,2);
        \draw(1.5,2)--(1,1);
        \draw(1.5,2)--(2,1);
    \draw(2,2.5)--(3,1.5);
        \draw(3,1.5)--(3,1);
    \draw(4,1)--(3.5,2);
    \draw(3.5,2)--(3,1.5);
    \draw(3.5,2)--(4.5,2.5);
    \draw(4.5,2.5)--(5.5,2);
        \draw(5.5,2)--(5,1);
        \draw(5.5,2)--(6,1);
\end{tikzpicture}
    &\begin{tikzpicture}[thick,scale=\FigOneScale]
    \node[align=left] at (3.5,2.75) {\small $U(N^d)$};
    \node[outer sep=2.5pt, fill=black,circle,inner sep=1pt, label=below: {$1$} ] at (1,1){};
    \node[outer sep=2.5pt, fill=black,circle,inner sep=1pt, label=above: {$2$} ] at (1,2){};
    \node[outer sep=2.5pt, fill=black,circle,inner sep=1pt, label=below: {$3$} ] at (3,1){};
    \node[outer sep=2.5pt, fill=black,circle,inner sep=1pt, label=below: {$4$} ] at (4,1){};
    \node[outer sep=2.5pt, fill=black,circle,inner sep=1pt, label=below: {$5$} ] at (6,1){};
    \node[outer sep=2.5pt, fill=black,circle,inner sep=1pt, label=above: {$6$} ] at (6,2){};
    \node[fill=black,circle,inner sep=1pt] at (2,1.5){};
    \node[fill=black,circle,inner sep=1pt] at (3,1.5){};
    \node[fill=black,circle,inner sep=1pt] at (4,1.5){};
    \node[fill=black,circle,inner sep=1pt] at (5,1.5){};
    \node[fill=black,circle,inner sep=1pt] at (4.5,1.5){};
    \node[fill=black,circle,inner sep=1pt] at (2.5,1.5){};
    \draw(1,1)--(2,1.5);
    \draw(1,2)--(2,1.5);
    \draw(2,1.5)--(5,1.5);
    \draw(5,1.5)--(6,1);
    \draw(5,1.5)--(6,2);
    \draw(3,1)--(3,1.5);
    \draw(4,1)--(4,1.5);
\end{tikzpicture}
    &\begin{tikzpicture}[thick,scale=\FigOneScale]
    \node[align=left] at (3.45,2.75) {\small $U(N^d)^-$};
    \node[outer sep=2.5pt, fill=black,circle,inner sep=1pt, label=below: {$1$} ] at (1,1){};
    \node[outer sep=2.5pt, fill=black,circle,inner sep=1pt, label=above: {$2$} ] at (1,2){};
    \node[outer sep=2.5pt, fill=black,circle,inner sep=1pt, label=below: {$3$} ] at (3,1){};
    \node[outer sep=2.5pt, fill=black,circle,inner sep=1pt, label=below: {$4$} ] at (4,1){};
    \node[outer sep=2.5pt, fill=black,circle,inner sep=1pt, label=below: {$5$} ] at (6,1){};
    \node[outer sep=2.5pt, fill=black,circle,inner sep=1pt, label=above: {$6$} ] at (6,2){};
    \node[fill=black,circle,inner sep=1pt] at (2,1.5){};
    \node[fill=black,circle,inner sep=1pt] at (3,1.5){};
    \node[fill=black,circle,inner sep=1pt] at (4,1.5){};
    \node[fill=black,circle,inner sep=1pt] at (5,1.5){};
    \draw(1,1)--(2,1.5);
    \draw(1,2)--(2,1.5);
    \draw(2,1.5)--(5,1.5);
    \draw(5,1.5)--(6,1);
    \draw(5,1.5)--(6,2);
    \draw(3,1)--(3,1.5);
    \draw(4,1)--(4,1.5);
\end{tikzpicture}\\
    $5|12, 5|23,5|13,$&$5|12, 5|23,5|13,$&$12|3,12|4,34|1,$&$12|3,34|5,34|5,$&$\emptyset$&$\emptyset$\\
    $1|23, 4|23$&$1|23, 4|23$&$34|2,56|4$&$56|3,56|4$&&\\
    (a)&(b)&(c)&(d)&(e)&(f)\\
\end{tabular}
}
\end{center}
\caption{Encoding Networks and their induced triplet systems: The 2-network $N^a$ 
is not encoded by its induced triplets since $\cR(N^a)=\{5|12, 5|23,5|13, 1|23, 4|23\}=\cR(N^b)$ holds for the $2$-network $N^b$. $N^a$ is not arboreal, but $N^b$ is.
Two arboreal networks $N^c$ and $N^d$
that have the same underlying {phylogenetic} tree given in (f) and obtained from (e).
}
\label{fig:not-encoded}
\label{fig:aboreal-from-tree}
\end{figure*}

An {\bf $m$-network} $N$ (on $X$) is an mDAG with $m\geq 1$ roots such that $U(N)$ is connected, $L(N)=X$,
a root has in-degree zero and out-degree 2, there are no vertices of in-degree 1 and out-degree 1, and an interior vertex has either out-degree 1 and in-degree at least 2 ({\bf hybrid}) or in-degree 1 and out-degree at least 2 ({\bf tree vertex}). We call  an ordered pair $(a,b)$ with $a\ne b \in X$ a {\bf reticulated cherry} of $N$ if $b$ is the child of a hybrid $h$ of $N$ and $h$ and $a$ share a parent. For example, $(3,4)$ is a reticulated cherry in the network in Fig.~\ref{fig:not-encoded}(c).
Note that if $m=1$ and the out-degree of the root is allowed to be 2 or more, then $N$ is also called a {\bf rooted phylogenetic tree (on $X$)}. Two $m$-networks $N$ and $N'$ on $X$ are {\bf isomorphic}, {denoted} by $N\simeq N'$, if they are isomorphic as mDAGs and this isomorphism is the identity on $X$. 

A $m$-network $N$ is called {\bf arboreal} if $U(N)^-$ is an unrooted phylogenetic tree. 
For $v\in V(N)$, we define $L(v)$ to be the set of all leaves of $N$ that are below $v$. 
We define $N_v$ to be the spanning tree of $N$ with leaf set $L(v)$.  
There can be multiple arboreal networks for the same underlying phylogenetic tree (see Fig.~\ref{fig:not-encoded}).
We call a rooted phylogenetic tree on 
$X=\{x,y,z\}$ with cherry $\{x,y\}$  a {\bf (rooted) triplet 
(on $X$)} and denote it by $xy|z$ (or alternatively, by
$z|xy$) where, the order of $x$ and $y$ does not matter. We refer to a set $\cR$ of triplets as a {\bf triplet system (on $L(\cR):=\bigcup_{t\in \cR}L(t)$)}. These arise naturally in the context of arboreal networks. Let $N$ be an arboreal network on $X$ and let $x,y,z\in X$ denote three pairwise distinct elements. Then we say that the triplet $xy|z$ is {\bf induced} by $N$ if 
there  exists a root $r$ of $N$ such that $x,y,z\in L(r)$ and $z$ is not below the least common ancestor of $x$ or $y$. For example, the triplet $56|4$ is induced by the arboreal network 
in Fig.~\ref{fig:aboreal-from-tree}(c).
The set $\cR(N)$ of all triplets induced by an arboreal network, $N$, can be empty as in  Fig.~\ref{fig:mountain-range}(e).

\begin{figure*}[t]
\begin{center}
{\scriptsize
\begin{tabular}{r c c c c c c}
    &\begin{tikzpicture}[thick,scale=\FigTwoScale]
    \node[align=left] at (1,2.5) {\small $N^a$};
    \node[outer sep=2.5pt, fill=black,circle,inner sep=1pt, label=below: {$1$} ] at (1,1){};
    \node[outer sep=2.5pt, fill=black,circle,inner sep=1pt, label=below: {$2$} ] at (2,1){};
    \node[outer sep=2.5pt, fill=black,circle,inner sep=1pt, label=below: {$3$} ] at (3,1){};
    \node[outer sep=2.5pt, fill=black,circle,inner sep=1pt, label=below: {$4$} ] at (4,1){};
    \node[fill=black,circle,inner sep=1pt] at (1.5,1.75){};
    \node[fill=black,circle,inner sep=1pt] at (2,2.5){};
    \draw(1,1)--(1.5,1.75);
    \draw(2,1)--(1.5,1.75);
    \draw(1.5,1.75)--(2,2.5);
    \draw(3,1)--(2,2.5);
    \draw(4,1)--(2,2.5);
\end{tikzpicture}
    &\begin{tikzpicture}[thick,scale=\FigTwoScale]
    \node[align=left] at (1,2.5) {\small $N^b$};
    \node[outer sep=2.5pt, fill=black,circle,inner sep=1pt, label=below: {$1$} ] at (2,1){};
    \node[outer sep=2.5pt, fill=black,circle,inner sep=1pt, label=below: {$2$} ] at (3,1){};
    \node[outer sep=2.5pt, fill=black,circle,inner sep=1pt, label=below: {$3$} ] at (1,1){};
    \node[outer sep=2.5pt, fill=black,circle,inner sep=1pt, label=below: {$4$} ] at (4,1){};
    \node[fill=black,circle,inner sep=1pt] at (2.5,1.5){};
    \node[fill=black,circle,inner sep=1pt] at (2.5,2){};
    \node[fill=black,circle,inner sep=1pt] at (2,2.5){};
    \node[fill=black,circle,inner sep=1pt] at (3,2.5){};
    \draw(1,1)--(2,2.5);
    \draw(2,1)--(2.5,1.5);
    \draw(3,1)--(2.5,1.5);
    \draw(4,1)--(3,2.5);
    \draw(2.5,1.5)--(2.5,2);
    \draw(2.5,2)--(2,2.5);
    \draw(2.5,2)--(3,2.5);
\end{tikzpicture}
    &\begin{tikzpicture}[thick,scale=\FigTwoScale]
    \node[align=left] at (1,2.5) {\small $N^c$};
    \node[outer sep=2.5pt, fill=black,circle,inner sep=1pt, label=below: {$1$} ] at (2,1){};
    \node[outer sep=2.5pt, fill=black,circle,inner sep=1pt, label=below: {$2$} ] at (3,1){};
    \node[outer sep=2.5pt, fill=black,circle,inner sep=1pt, label=below: {$3$} ] at (1,1){};
    \node[outer sep=2.5pt, fill=black,circle,inner sep=1pt, label=below: {$4$} ] at (4,1){};
    \node[fill=black,circle,inner sep=1pt] at (2.5,1.75){};
    \node[fill=black,circle,inner sep=1pt] at (2,2.5){};
    \node[fill=black,circle,inner sep=1pt] at (3.5,1.75){};
    \node[fill=black,circle,inner sep=1pt] at (2.875,1.25){};
    \draw(2,1)--(2.5,1.75);
    \draw(3,1)--(2.5,1.75);
    \draw(1,1)--(2,2.5);
    \draw(2.5,1.75)--(2,2.5);
    \draw(4,1)--(3.5,1.75);
    \draw(2.875,1.25)--(3.5,1.75);
\end{tikzpicture}
    &\begin{tikzpicture}[thick,scale=\FigTwoScale]
    \node[align=left] at (1,2.5) {\small $N^d$};
    \node[outer sep=2.5pt, fill=black,circle,inner sep=1pt, label=below: {$2$} ] at (2,1){};
    \node[outer sep=2.5pt, fill=black,circle,inner sep=1pt, label=below: {$1$} ] at (3,1){};
    \node[outer sep=2.5pt, fill=black,circle,inner sep=1pt, label=below: {$3$} ] at (1,1){};
    \node[outer sep=2.5pt, fill=black,circle,inner sep=1pt, label=below: {$4$} ] at (4,1){};
    \node[fill=black,circle,inner sep=1pt] at (2.5,1.75){};
    \node[fill=black,circle,inner sep=1pt] at (2,2.5){};
    \node[fill=black,circle,inner sep=1pt] at (3.5,1.75){};
    \node[fill=black,circle,inner sep=1pt] at (2.875,1.25){};
    \draw(2,1)--(2.5,1.75);
    \draw(3,1)--(2.5,1.75);
    \draw(1,1)--(2,2.5);
    \draw(2.5,1.75)--(2,2.5);
    \draw(4,1)--(3.5,1.75);
    \draw(2.875,1.25)--(3.5,1.75);
\end{tikzpicture}
    &\begin{tikzpicture}[thick,scale=\FigTwoScale]
    \node[align=left] at (0.9,2.5) {\small $N^e$};
    \node[outer sep=2.5pt, fill=black,circle,inner sep=1pt, label=below: {$1$} ] at (2.5,1){};
    \node[outer sep=2.5pt, fill=black,circle,inner sep=1pt, label=below: {$2$} ] at (1,1){};
    \node[outer sep=2.5pt, fill=black,circle,inner sep=1pt, label=below: {$3$} ] at (4,2){};
    \node[outer sep=2.5pt, fill=black,circle,inner sep=1pt, label=below: {$4$} ] at (3.5,1){};
    \node[fill=black,circle,inner sep=1pt] at (2.5,1.5){};
    \node[fill=black,circle,inner sep=1pt] at (3,2.5){};
    \node[fill=black,circle,inner sep=1pt] at (3,1.75){};
    \node[fill=black,circle,inner sep=1pt] at (2,2){};
    \node[fill=black,circle,inner sep=1pt] at (1.5,2.5){};
    \draw(1,1)--(1.5,2.5);
    \draw(2.5,1)--(2.5,1.5);
    \draw(2.5,1.5)--(1.5,2.5);
    \draw(3,2.5)--(4,2);
    \draw(3,2.5)--(2,2);
    \draw(3.5,1)--(3,1.75);
    \draw(2.5,1.5)--(3,1.75);
\end{tikzpicture}
    &\begin{tikzpicture}[thick,scale=\FigTwoScale]
    \node[align=left] at (0.9,2.5) {\small $N^f$};
    \node[outer sep=2.5pt, fill=black,circle,inner sep=1pt, label=below: {$1$} ] at (2.5,1){};
    \node[outer sep=2.5pt, fill=black,circle,inner sep=1pt, label=below: {$2$} ] at (1,1){};
    \node[outer sep=2.5pt, fill=black,circle,inner sep=1pt, label=below: {$3$} ] at (4,2){};
    \node[outer sep=2.5pt, fill=black,circle,inner sep=1pt, label=below: {$4$} ] at (3.5,1){};
    \node[fill=black,circle,inner sep=1pt] at (2.5,1.5){};
    \node[fill=black,circle,inner sep=1pt] at (3,2.5){};
    \node[fill=black,circle,inner sep=1pt] at (3,1.75){};
    \node[fill=black,circle,inner sep=1pt] at (1.5,2.5){};
    \draw(1,1)--(1.5,2.5);
    \draw(2.5,1)--(2.5,1.5);
    \draw(2.5,1.5)--(1.5,2.5);
    \draw(3,2.5)--(4,2);
    \draw(3,2.5)--(2.5,1.5);
    \draw(3.5,1)--(3,1.75);
    \draw(2.5,1.5)--(3,1.75);
\end{tikzpicture}\\
    duets:&$\emptyset$&$\emptyset$&$\duet{2}{4}$&$\duet{1}{4}$&$\duet{1}{2},\duet{1}{3},\duet{1}{4}$&$\duet{1}{2},\duet{1}{3},\duet{1}{4}$\\
    triplets:& $12|3,12|4$&$12|3,12|4$&$12|3$&$12|3$&$\emptyset$&$\emptyset$\\
    &(a)&(b)&(c)&(d)&(e)&(f)\\
\end{tabular}
}
\end{center}

    \caption{Networks on $\{1,2,3,4\}$: 
    (a) An mDAG, $N^a$, that is a rooted phylogenetic tree, but not a $1$-network, since the root has out-degree 3. (b) There are no triplets or duets that imply a relationship between leaves $3$ and $4$ {in $N^b$}. {Thus,} the only arboreal network that induces $\{12|3,12|4\}=\mathcal R(N^a)$ is the $2$-network $N^b$. {This implies that $N^b$ is encoded  by $\cR(N^b)$ within the class $\cA(X)$ and illustrates the out-degree 2 requirement of a root of an $m$-network.} 
    (c) and (d) Arboreal  networks $N^c$ and $N^d$ with $\cR(N^d)=\mathcal R(N^c)\not=\emptyset\not=\mathcal D(N^c)$. 
    (e) An arboreal network $N^e$ that is not stack-free. (f) A stack-free arboreal network $N^f$ such that $\mathcal R(N^e)\cup\mathcal D(N^e)=\mathcal R(N^f)\cup\mathcal D(N^f)$. {Duet-triplet} distance (see Sec.~\ref{sec:metric}) examples:  $d(N^b,N^c) = 2$ and $d(N^e,N^f) = 0$ since duets and triplets cannot account for stacked hybrids.
    }
    \label{fig:mountain-range}
    \label{fig:stackfree-necessary}
    \label{fig:absense-of-evidence}
\end{figure*}

\section{Combinatorial Classifications}
\label{sec:characterization}

A rooted phylogenetic tree $T$  on $X$ is {\bf encoded} by
$\cR(T)$ {(within the class $\mathcal T$ of rooted phylogenetic trees)} if any other tree $T'$ in $\mathcal T$ for which $\mathcal R(T)=\mathcal R(T')$ is isomorphic with $T$ \cite{bryant,SS03}.  {Extending the notion of an encoding to 
$m$-networks  by replacing ``rooted phylogenetic tree" by ``$m$-network" in the definition of an encoding, then  even if  
a 1-network $N$ is binary and no two cycles in $U(N)^-$ share a vertex it is not, in general, encoded by $\cR(N)$ \cite{GH12}.}
Interestingly, such 1-networks are encoded by their induced set of trinets \cite{HM13} {where an encoding in terms of such structures is defined analogously to the triplet system case. Without going into details, a trinet} can be thought of as a natural generalization of a triplet to a $1$-network on 3 leaves (see \cite{OWvIM16,vIM14} for more on encodings of $1$-networks by trinets). Since trinets are, in particular, 1-networks it follows that they cannot directly be used as a starting point for finding an ``encoding" for general arboreal networks.

{Even if an $m$-network $N$ is arboreal it is not, in general, encoded by $\cR(N)$.} The two arboreal networks $N^c$ and $N^d$ in Fig.~\ref{fig:stackfree-necessary} induce the same triplet system but are not isomorphic. 
To distinguish these networks, we define a 
combinatorial structure on pairs of leaves that is induced by an {$m$-network}: we call an unordered pair of two distinct elements  $x,y\in X$ a {\bf duet} and denote it by $\duet{x}{y}$ or, alternatively, $\duet{y}{x}$.  Moreover, we refer to  a set $\cD$ of duets as a {\bf duet system (on $L(\cD):=\bigcup_{\duet{a}{b}\in \cD} \{a,b\}$).} Let $N$ be an {$m$-network} on $X$. Then we say that a duet 
$\duet{x}{y}$ is {\bf induced by $N$} if there exists no triplet $t\in \cR(N)$ 
{such that $x,y\in L(t)$} and
the path in $U(N)$ joining $x$ and $y$ crosses precisely one degree two vertex of $U(N)$. For example, the duet $\duet{1}{4}$ is induced by the {$m$-network in the form of the} arboreal network in Fig.~\ref{fig:stackfree-necessary}(d).
The arboreal networks in Fig.~\ref{fig:not-encoded} demonstrate that there exist $m$-networks $N$ for which $\cD(N)$ is empty. 

We next extend the notion of an encoding of a rooted phylogenetic tree to 
$m$-networks {$N$ in a more restrictive way: in addition to making the same replacements as before we  also replace ``$\cR(N)$" by ``$\cR(N)\cup \cD(N)$".}
As the examples in {Fig.~\ref{fig:stackfree-necessary} 
demonstrate}, {even} $m$-rooted networks {that are arboreal} are, in general, not encoded by $\cR(N)\cup\cD(N)$. {For this example, it is due to $N^e$ containing a {\bf stacked hybrid}, that is, an arc whose head and tail is a hybrid.}
We call {an arboreal network $N$}  {\bf stack-free} if $N$ does not contain {such an arc.} 
Furthermore, 
we say that $N$ is {\bf banyan} if
the parents of every hybrid of $N$ are roots of $N$ and every leaf of $N$ is the child of a root or the child of a hybrid. For example, the arboreal network in Fig.~\ref{fig:mountain-range}(f) is banyan. 
{Denoting the class of stack-free arboreal networks on $X$ by $\cA(X)$, we have: }

\begin{lemma}\label{lem:encode-banyan}
{Let $N$ be a network in $\cA(X)$.}
Then the following are equivalent.
\begin{enumerate}
\item[(i)] $N$ is banyan.
\item[(ii)] $\cR(N)=\emptyset$ .
\item[(iii)]  $N$ is encoded by $\cD(N)$ {within $\cA(X)$.}
\end{enumerate}
\end{lemma}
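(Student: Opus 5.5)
I will prove the cycle (i)$\Rightarrow$(ii)$\Rightarrow$(i) and then (i)$\Leftrightarrow$(iii), using throughout that $N$ is arboreal and stack-free. The basic structural fact I will rely on is this. Since $U(N)^-$ is a tree, for every vertex $v$ the subgraph $N_v$ below $v$ is a rooted tree whose vertex $v$ has out-degree at least 2 (or $v$ is a hybrid). Hence a triplet is induced as soon as some root has, below it, a tree vertex $w\neq$ root with $|L(w)|\ge 2$ together with a third leaf below the root but not below $w$.

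\emph{(i)$\Rightarrow$(ii).} If $N$ is banyan, every non-leaf, non-root vertex is a hybrid whose parents are roots and whose child is a leaf. Below any root $r$ the set $L(r)$ therefore consists exactly of the leaves reached by paths $r\to x$ or $r\to h\to x$. Any two leaves $x,y\in L(r)$ thus have $r$ as their only common ancestor below $r$, and every third leaf $z\in L(r)$ lies below $\mathrm{lca}(x,y)=r$. So no triplet is induced and $\cR(N)=\emptyset$.

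\emph{(ii)$\Rightarrow$(i).} I argue the contrapositive. Suppose $N$ is not banyan.
\begin{itemize}
\item If there is a tree vertex $w$ (in-degree 1, out-degree $\ge2$), follow its parent upward to a root $r$. Since $r$ has out-degree 2 and $U(N)$ is a tree, the other child of $r$ (or, if the path passes through a hybrid, the sibling structure at the first branching) supplies a leaf $z$ below $r$ that is not below $w$. Picking $x,y$ in different child subtrees of $w$ gives $xy|z\in\cR(N)$.
\item Otherwise every interior vertex is a hybrid. Stack-freeness forces each hybrid's child to be a leaf, since it cannot be a hybrid and there are no tree vertices. The parents of each hybrid are then roots, so $N$ is banyan.
\end{itemize}
Hence no tree vertex exists, which is exactly the banyan condition, and (ii)$\Rightarrow$(i) follows.

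\emph{(i)$\Leftrightarrow$(iii).} For (i)$\Rightarrow$(iii), I show that a banyan $N$ can be reconstructed from $\cD(N)$. For a banyan network, $\cR=\emptyset$, so $\duet{x}{y}\in\cD(N)$ holds exactly when the $U(N)$-path from $x$ to $y$ crosses precisely one degree-2 vertex. A root has degree 2, and a hybrid with two parents together with its leaf child also gives a degree-2 vertex only in special cases; this identification needs checking carefully. The plan is to show that the following are determined by $\cD(N)$ and by which leaves co-occur:
\begin{itemize}
\item two leaves are in a duet iff they share a root;
\item the roots correspond to maximal sets of leaves forming a pairwise-duet clique;
\item a leaf lies under a hybrid iff it belongs to more than one such clique, and its parent roots are those cliques.
\end{itemize}
This rebuilds $N$ up to isomorphism. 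Any $N'\in\cA(X)$ with $\cD(N')=\cD(N)$ must have $\cR(N')$ not contradicting this. The subtle point is that the encoding requires comparing with all of $\cA(X)$ using $\cR\cup\cD$; since $\cR(N)=\emptyset$, we need $\cR(N')=\emptyset$, so $N'$ is banyan by (ii)$\Rightarrow$(i) and the reconstruction applies.

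For (iii)$\Rightarrow$(i), suppose $N$ is encoded by $\cD(N)$ but not banyan, so $\cR(N)\neq\emptyset$. I then need a different $N'$ with the same duets. The natural reading of (iii) is that $\cR(N)\cup\cD(N)=\cD(N)$ suffices, which forces $\cR(N)=\emptyset$. Alternatively, I would construct a non-isomorphic $N'$, for instance by flipping the triplet structure while preserving duets. I expect this step, together with the clique identification in the reconstruction, to be the main obstacle. I would first try the interpretation argument that an encoding by $\cD(N)$ alone means the triplet part of the encoding is empty.
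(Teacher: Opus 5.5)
Your proposal follows the paper's proof in all three parts. For (i)$\Leftrightarrow$(ii), a tree vertex always produces a triplet, and stack-freeness forces a network with no tree vertices to be banyan. For (i)$\Rightarrow$(iii), $\cR(N')\cup\cD(N')=\cD(N)$ forces $\cR(N')=\emptyset$, so $N'$ is banyan and can be read off its duets. For (iii)$\Rightarrow$(ii), the paper uses exactly the reading you call natural: it compares $\cR(N')\cup\cD(N')$ with $\cR(N)\cup\cD(N)=\cD(N)$. So you do not need to construct a competing network.

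The identification step you left open closes easily, but your aside about it is wrong. A hybrid has in-degree at least $2$ and out-degree $1$, so it never has degree $2$ in $U(N)$; the degree-$2$ vertices of $U(N)$ are exactly the roots. This gives the following facts for a banyan network:
\begin{itemize}
\item Every child of a root is a leaf or a hybrid with a single leaf child. Hence each root $r$ has $|L(r)|=2$ and yields exactly one duet, so your ``maximal cliques'' are just single duets.
\item Two distinct roots cannot yield the same duet $\duet{x}{y}$. Otherwise $x$ and $y$ would both be hybrid children of both roots, giving a $4$-cycle in $U(N)$.
\item A leaf is the child of a hybrid if and only if it lies in at least two duets.
\end{itemize}
Together these are the paper's duet-counting argument made explicit, and they rebuild $N'$ up to isomorphism.

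In (i)$\Rightarrow$(ii) you also assert without proof that a banyan network has no tree vertices; add one line. A tree vertex could have neither leaf nor hybrid children, so it would start an infinite descending chain of tree vertices in a finite acyclic graph.
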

\begin{proof}
(i) $ \Rightarrow $ (ii): Suppose that $N$ is banyan. 
Then, every leaf is the child of a root or of a hybrid.  For $N$ to induce a triplet $xy|z$ with $x\ne y\ne z$ in $X$, the shared parent $p$ of $x$ and $y$ cannot be a root.  So, $p$
must be a hybrid, but this contradicts the fact that the child of a hybrid is a leaf.
Thus, $\cR(N)=\emptyset$.

(ii) $ \Rightarrow $ (i): Assume that $\cR(N)=\emptyset$ and, for contradiction, that $N$ is not banyan. Then there exists a leaf $l$ of $N$ that is neither the child of a hybrid nor of a root of $N$, or there exist a hybrid $h$ of $N$ that has a parent that is not a root of $N$. In the former, it follows that $l$'s parent $p$  must be a tree vertex of $N$. Choose a vertex $v$ strictly above $p$ and leaves $l_v$ and $l_p$ such that $l_v$ is below $v$ but not below $p$ and $l_p$ is below $p$ but distinct from $l$. Then $l_pl|l_v\in \cR(N)$; a contradiction as $\cR(N)=\emptyset$. For the latter case, assume that $p$ is a parent of $h$ that is not a root of $N$. Then since $N$ is stack-free, 
$p$ must be a tree vertex. Choosing $v$ {to be the parent of $p$ and} $l_v$, and $l_p$ as before and denoting by $l$ a leaf below $h$ implies again $\cR(N)\not=\emptyset$; a contradiction. 

(i) \& (ii) $\Rightarrow$ (iii); Suppose that $N$ is banyan and that there exists a {network $N'\not=N$ in $\cA(X)$}
such that $\cR(N')\cup\cD(N')=\cD(N)\cup\mathcal R(N)$. Then, by (ii), $\cR(N)=\emptyset$. Since $\cR(M)\cap \cD(M)=\emptyset$ holds for any arboreal network $M$, it follows that $\cR(N')=\emptyset$. 
By the equivalence of {(i) and (ii)}, $N'$ must also be banyan {as it is stack-free}. Since 
$\cD(N')=\cR(N')\cup\cD(N')=\cR(N)\cup\cD(N)=\cD(N)$, it follows that every element in $X$ must be contained in the same number of duets in $\cD(N)$ as in $\cD(N')$. For all $x\in X$, we therefore have that $x$ is either adjacent to a root in $N$ if and  only if $x$ is adjacent to a root in $N'$ or that $x$ is adjacent to a hybrid in $N$ if and only if $x$ is adjacent to a hybrid in $N'$. Since the length of a path from a root to a leaf in a banyan network is one or two, it follows that there exists a bijection from $V(N)$ to $V(N')$ that induces an isomorphism between $N$ and $N'$. Thus, $N$ is encoded by $\cD(N)$.

(iii) $\Rightarrow$ (ii): Assume that $N$ is encoded by $\cD(N)$. Then the definition of a duet combined with the fact that {$X=L(\cD(N))$}
implies that $\cR(N)=\emptyset$. 
\end{proof}

\begin{figure*}[t]
{\small 
\begin{center}
    \begin{tabular}{ccccc}
        \begin{tikzpicture}[thick,scale=\FigThreeScale]
    \node[align=left] at (1,5.25) {\small $N$};
    \node[outer sep=2pt, fill=black,circle,inner sep=1pt, label=below: {$1$} ] at (1,3){};
    \node[outer sep=2pt, fill=black,circle,inner sep=1pt, label=below: {$2$} ] at (2,3){};
    \node[outer sep=2pt, fill=black,circle,inner sep=1pt, label=below: {$3$} ] at (3,1){};
    \node[outer sep=2pt, fill=black,circle,inner sep=1pt, label=below: {$4$} ] at (3.5,1){};
    \node[outer sep=2pt, fill=black,circle,inner sep=1pt, label=below: {$5$} ] at (4,1){};
    \node[outer sep=2pt, fill=black,circle,inner sep=1pt, label=below: {$6$} ] at (5,2){};
    \node[outer sep=2pt, fill=black,circle,inner sep=1pt, label=below: {$7$} ] at (6,2){};
    \node[outer sep=2pt, fill=black,circle,inner sep=1pt, label=below: {$8$} ] at (7,3){};
    \node[outer sep=2pt, fill=black,circle,inner sep=1pt, label=below: {$9$} ] at (4,4.5){};
    \node[outer sep=2pt, fill=black,circle,inner sep=1pt, label=below: {$10$} ] at (5,4.5){};
    \node[fill=black,circle,inner sep=1pt] at (1.5,3.5){};
    \node[fill=black,circle,inner sep=1pt] at (2,4){};
    \node[fill=black,circle,inner sep=1pt] at (3.5,2.5){};
    \node[fill=black,circle,inner sep=1pt] at (3.5,2){};
    \node[fill=black,circle,inner sep=1pt] at (3.5,5.5){};
    \node[fill=black,circle,inner sep=1pt] at (6.75,3.75){};
    \node[fill=black,circle,inner sep=1pt] at (5.5,3.3){};
    \node[fill=black,circle,inner sep=1pt] at (4.5,5.25){};
    \node[fill=black,circle,inner sep=1pt] at (3.8,4.875){};
    \draw(1,3)--(1.5,3.5);
    \draw(2,3)--(1.5,3.5);
    \draw(1.5,3.5)--(2,4);
    \draw(2,4)--(3.5,2.5);
    \draw(3.5,2.5)--(6.75,3.75);
    \draw(3.5,2)--(3.5,3);
    \draw(3.5,2)--(3.5,5.5);
    \draw(4,4.5)--(3.5,5.5);
    \draw(4.5,5.25)--(5,4.5);
    \draw(3.8,4.875)--(4.5,5.25);
    \draw(6.75,3.75)--(7,3);
    \draw(5.5,3.3)--(5,2);
    \draw(5.5,3.3)--(6,2);
    \draw(3,1)--(3.5,2);
    \draw(3.5,1)--(3.5,2);
    \draw(3.5,2)--(4,1);    
\end{tikzpicture}
        &\mbox{\hspace{.25in}}
        &\begin{tikzpicture}[thick,scale=\FigThreeScale]
    \node[align=left] at (1,5.5) {\small $U(N)$};
    \node[outer sep=2pt, fill=black,circle,inner sep=1pt, label=left: {$1$} ] at (1,4){};
    \node[outer sep=2pt, fill=black,circle,inner sep=1pt, label=left: {$2$} ] at (1,2){};
    \node[outer sep=2pt, fill=black,circle,inner sep=1pt, label=below: {$3$} ] at (3,1){};
    \node[outer sep=2pt, fill=black,circle,inner sep=1pt, label=below: {$4$} ] at (3.5,1){};
    \node[outer sep=2pt, fill=black,circle,inner sep=1pt, label=below: {$5$} ] at (4,1){};
    \node[outer sep=2pt, fill=black,circle,inner sep=1pt, label=right: {$6$} ] at (6,2){};
    \node[outer sep=2pt, fill=black,circle,inner sep=1pt, label=right: {$7$} ] at (6,3){};
    \node[outer sep=2pt, fill=black,circle,inner sep=1pt, label=right: {$8$} ] at (6,4){};
    \node[outer sep=2pt, fill=black,circle,inner sep=1pt, label=above: {$9$} ] at (3,5){};
    \node[outer sep=2pt, fill=black,circle,inner sep=1pt, label=above: {$10$} ] at (4,5){};
    \node[fill=black,circle,inner sep=1pt,] at (2,3){};
    \node[fill=black,circle,inner sep=1pt,label=below left: {$p$} ] at (5,3){};
    \node[fill=black,circle,inner sep=1pt,label=below left: {$q$} ] at (3.5,3){};
    \node[fill=black,circle,inner sep=1pt] at (3.5,2){};
    \node[fill=black,circle,inner sep=1pt] at (3.5,4){};
    \node[fill=black,circle,inner sep=1pt] at (3.5,3.5){};
    \node[fill=black,circle,inner sep=1pt] at (2.5,3){};
    \node[fill=black,circle,inner sep=1pt] at (3.75,4.5){};
    \node[fill=black,circle,inner sep=1pt,label= above left: {$r_8$} ] at (5.5,3.5){};
    \draw(1,4)--(2,3);
    \draw(1,2)--(2,3);
    \draw(2,3)--(5,3);
    \draw(3.5,3)--(3.5,2);
    \draw(3,1)--(3.5,2);
    \draw(3.5,1)--(3.5,2);
    \draw(3.5,2)--(4,1);
    \draw(5,3)--(6,2);
    \draw(5,3)--(6,3);
    \draw(5,3)--(6,4);
    \draw(3,5)--(3.5,4);
    \draw(4,5)--(3.5,4);
    \draw(3.5,4)--(3.5,3);
\end{tikzpicture}
        &\mbox{\hspace{.25in}}
        &\begin{tikzpicture}[thick,scale=\FigThreeScale]
    \node[align=left] at (1,5.5) {\small $U(N)^-$};
    \node[outer sep=2pt, fill=black,circle,inner sep=1pt, label=left: {$1$} ] at (1,4){};
    \node[outer sep=2pt, fill=black,circle,inner sep=1pt, label=left: {$2$} ] at (1,2){};
    \node[outer sep=2pt, fill=black,circle,inner sep=1pt, label=below: {$3$} ] at (3,1){};
    \node[outer sep=2pt, fill=black,circle,inner sep=1pt, label=below: {$4$} ] at (3.5,1){};
    \node[outer sep=2pt, fill=black,circle,inner sep=1pt, label=below: {$5$} ] at (4,1){};
    \node[outer sep=2pt, fill=black,circle,inner sep=1pt, label=right: {$6$} ] at (6,2){};
    \node[outer sep=2pt, fill=black,circle,inner sep=1pt, label=right: {$7$} ] at (6,3){};
    \node[outer sep=2pt, fill=black,circle,inner sep=1pt, label=right: {$8$} ] at (6,4){};
    \node[outer sep=2pt, fill=black,circle,inner sep=1pt, label=above: {$9$} ] at (3,5){};
    \node[outer sep=2pt, fill=black,circle,inner sep=1pt, label=above: {$10$} ] at (4,5){};
    \node[fill=black,circle,inner sep=1pt,] at (2,3){};
    \node[fill=black,circle,inner sep=1pt,label=below left: {$p$} ] at (5,3){};
    \node[fill=black,circle,inner sep=1pt,label=below left: {$q$} ] at (3.5,3){};
    \node[fill=black,circle,inner sep=1pt] at (3.5,2){};
    \node[fill=black,circle,inner sep=1pt] at (3.5,4){};
    \draw(1,4)--(2,3);
    \draw(1,2)--(2,3);
    \draw(2,3)--(5,3);
    \draw(3.5,3)--(3.5,2);
    \draw(3,1)--(3.5,2);
    \draw(3.5,1)--(3.5,2);
    \draw(3.5,2)--(4,1);
    \draw(5,3)--node[below]{$e_6$}(6,2);
    \draw(5,3)--node[below right]{$e_7$}(6,3);
    \draw(5,3)--node[above]{$e_8$}(6,4);
    \draw(3,5)--(3.5,4);
    \draw(4,5)--(3.5,4);
    \draw(3.5,4)--(3.5,3);
\end{tikzpicture}\\
        (a) && (b) && (c)\\
    \end{tabular}
\end{center}
}
\caption{(a) An arboreal network $N$ on $X=\{1,\ldots,10\}$ considered in the case of $|S|\geq 2$ in the proof of Theorem~\ref{the:encoding}. (b) The tree $U(N)$ on $X$. (c) The phylogenetic tree $U(N)^-$ on $X$. For the vertex $p$ of $U(N)^-$ as indicated, $Y=\{6,7,8\}$ and 
$S=\{6,7\}$.}
\label{fig:encoding-general}

\end{figure*}

Fig.~\ref{fig:stackfree-necessary}(e,f) shows the necessity that the arboreal network be stack-free for Lemma~\ref{lem:encode-banyan}  to hold.  

\begin{theorem} \label{the:encoding}
 {A network $N$ in the class $\cA(X)$ is encoded by $\cD(N)\cup \cR(N)$ within that class.}
\end{theorem}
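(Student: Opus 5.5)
I will argue by induction on $|X|$, structured around a vertex $p$ of the phylogenetic tree $U(N)^-$ next to the leaves, as suggested by Fig.~\ref{fig:encoding-general}. Let $N,N'\in\cA(X)$ satisfy $\cR(N)\cup\cD(N)=\cR(N')\cup\cD(N')$. Because a triplet involves three leaves and a duet two, the two sets can be separated, so $\cR(N)=\cR(N')$ and $\cD(N)=\cD(N')$. If $\cR(N)=\emptyset$, Lemma~\ref{lem:encode-banyan} shows that both networks are banyan and that $N\simeq N'$. This disposes of the base case and of the banyan case, so from now on I assume $\cR(N)\neq\emptyset$.

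\textbf{Step 1: the underlying tree is determined.} I will show that $\cR(N)\cup\cD(N)$ fixes $U(N)^-$. For this I use quartet-type information. Two leaves $x,y$ are separated from $z,w$ in $U(N)^-$ exactly when some induced triplet or chain of duets places $x,y$ on the same side. Concretely, a triplet $xy|z$ says that $z$ lies off the path between $x$ and $y$, relative to a common root. A duet $\duet{x}{y}$ says that $x$ and $y$ hang on opposite sides of a single hybrid or root with nothing else attached. Rather than reconstructing the whole tree at once, I will only locate a vertex $p$ of $U(N)^-$ all but at most one of whose neighbours are leaves. Let $Y$ be the set of those leaves. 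Then I classify how $Y$ attaches in $N$: as a generalized cherry below a tree vertex, as a leaf set below a hybrid, or as children of a root. I take $S\subseteq Y$ to be the leaves whose pendant arc in $U(N)$ passes through a degree-2 vertex, that is, through a hybrid or a root.

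\textbf{Step 2: case split on $|S|$.} If $S=\emptyset$, then $Y$ is a generalized cherry of a tree vertex. Triplets of the form $yy'|z$, with $y,y'\in Y$ and $z\notin Y$, witness this, and so the same cherry appears in $N'$. If $|S|=1$, the single leaf in $S$ forms a reticulated cherry. If $|S|\ge2$, as in the figure, the way the leaves of $S$ hang off roots and hybrids must be read off from the duets $\duet{s}{s'}$ together with the triplets that do or do not contain $s$. Stack-freeness matters here: every hybrid has a tree vertex or root as parent and a leaf or tree vertex as child, and this is what makes the local pattern recoverable, as the failure in Fig.~\ref{fig:stackfree-necessary}(e,f) shows. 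In each case I collapse the identified local structure to a single new leaf $y^*$ to obtain networks $N_1$ and $N_1'$ in $\cA(X_1)$ with $|X_1|<|X|$. I then check that their triplet and duet systems arise from those of $N$ and $N'$ by relabelling and deletion, so they are equal. Induction gives $N_1\simeq N_1'$, and reattaching the common local structure gives $N\simeq N'$.

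\textbf{Main obstacle.} I expect the case $|S|\ge 2$ to be the hardest. Two things must be shown there. First, the induced systems of the collapsed networks have to be exactly the restrictions of the original ones. Collapsing can create or destroy duets when a hybrid becomes suppressed, and it can also turn a network into one whose root has out-degree $1$, so in some sub-cases I will collapse only $Y\setminus\{s\}$ for a suitably chosen $s$. Second, the choice of which leaves of $S$ sit below a common root and which sit below a hybrid has to be forced by $\cD(N)$ and $\cR(N)$. My first attempt will be to compare, for each $s\in S$, the number of duets and triplets containing $s$, mimicking the degree-counting argument in the proof of Lemma~\ref{lem:encode-banyan}.
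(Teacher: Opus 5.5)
\textbf{There is a genuine gap: the step your proposal defers is the whole content of the theorem.} Your skeleton matches the paper's: induction on $|X|$, Lemma~\ref{lem:encode-banyan} when $\cR(N)=\emptyset$, a vertex $p$ of $U(N)^-$ next to a generalized cherry $Y$, and a case split on how the pendant edges at $p$ are subdivided. But to conclude $N\simeq N'$ you must show two things:
\begin{itemize}
\item the configuration you chose in $N$ (the set $Y$, which pendant edges are subdivided, whether $p$ is a tree vertex or a hybrid) occurs identically in every $N'\in\cA(X)$ with the same systems;
\item the reduced networks $N_1,N_1'$ again have equal systems.
\end{itemize}
Neither is established. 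Step~1 (``quartet-type information'' fixes $U(N)^-$) is asserted without proof and then dropped. For $S=\emptyset$, the triplets $yy'|z$ do not show that $Y$ is exactly the child set of a single tree vertex of $N'$. For $|S|\ge 2$ you only announce a ``first attempt'' by counting. The counting in Lemma~\ref{lem:encode-banyan} works only because every root-to-leaf path in a banyan network has length one or two, so it does not carry over here.

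\textbf{The collapse does not act by relabelling and deletion.} Take $p$ a tree vertex with child set $Y$, whose parent is a hybrid $q$ with a root parent $r'$ whose other child is a leaf $d$. Then $N$ induces the triplets $yy'|d$ and not the duet $\duet{d}{y}$. After collapsing $Y$ to $y^*$, the new duet $\duet{d}{y^*}$ appears, although no hybrid is suppressed. You would have to prove that every such change is determined by $\cR(N)\cup\cD(N)$.

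\textbf{Your case analysis rests on a misreading.} In $U(N)$ the only degree-2 vertices are roots, since a hybrid has in-degree at least $2$ and out-degree $1$. So your $S$ is the set of $y\in Y$ whose pendant edge is subdivided by a root $r_y$ with children $y$ and $p$. This is the complement in $Y$ of the paper's $S$; in Fig.~\ref{fig:encoding-general} your $S$ is $\{8\}$, not $\{6,7\}$. Two consequences follow.
\begin{itemize}
\item If $|S|\ge 2$, then $p$ is a hybrid with at least two root parents. For $s,s'\in S$ the path $s,r_s,p,r_{s'},s'$ crosses two roots, so $\duet{s}{s'}$ is never a duet. The data you propose to read off do not exist; every duet containing $s$ pairs it with a leaf outside $S$.
\item Your $|S|=1$ description fails whenever $|Y|\ge 3$. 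Then $p$ is a tree vertex with parent $r_s$ and child $q$, and there is no reticulated cherry. The figure you cite is exactly such a case.
\end{itemize}

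\textbf{How the paper proceeds.} It does not collapse $Y$. It deletes a single leaf $x\in Y$ (with $r_x$ when present, suppressing $p$ if necessary) and records in each sub-case exactly how the systems change. For example, $\cD(N')=\cD(N)-\{\duet{x}{s}\}$ when $p$ is a hybrid with leaf child $s$. Likewise, $\cD(N')=\cD(N)\cup\{\duet{l}{y}\}$ and $\cR(N')=\cR(N)-\{l|xy\}$ when $p$ has children $x,y$ and shares its root parent with a hybrid $q$ whose child is $l$. Single-leaf deletion keeps these changes small and explicit. Whichever reduction you use, you still have to supply the argument that the local configuration is forced by the data.
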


\begin{proof}
{We use induction on $n=|X|$.}
{The base case:} $n=3$. Then $N$ has either one or two roots. If $N$ has one root, then $N$  is a rooted triplet, since a root has out-degree 2. Clearly, $N$ is encoded by $\cR(N)$ in this case.
If $N$ has two roots, then $N$ is banyan. By Lemma~\ref{lem:encode-banyan}, $N$ is encoded by $\cD(N)$.

{Inductive Step: Assume true for $n$ and show true for $n+1$.}
{Let $|X|=n+1$ and assume that $N\in\cA(X)$.} 
In view of Lemma~\ref{lem:encode-banyan}, we may assume that $\cR(N)\not=\emptyset$ since otherwise $N$ is banyan and is encoded by $\cD(N)=\cD(N) \cup\cR(N)$.
Since $\cR(N)\not=\emptyset$, it follows that $U(N)^-$  cannot be a star tree.  
Choose an interior vertex $p$ of $U(N)^-$ such that $p$ is adjacent to every element in a generalized cherry $Y$ and precisely one interior vertex in $U(N)^-$, {denoted $q$}. 
Note that $q$ must exist 
{since $U(N)^-$ cannot be a star tree.}
Let $S$ denote the set of elements $y\in Y$ such that $\{y,p\}$ is also an edge in $U(N)$. 
We perform a case analysis on $|S|$ -- see Fig.~\ref{fig:encoding-general} for an illustration of the case $|S|\ge 2$.  
For all $y\in Y$, let $e_y$ denote the edge $\{p,y\}$ in $U(N)^-$ and, for all $y\in Y-S$,  let $r_y$ denote the subdivision vertex of $e_y$ in $U(N)$. Note that $r_y$ is a root of $N$ for all such $y$.

{\bf Case $|S|=0$:} Then, for all $y\in Y$, the  edge $e_y$ of $U(N)^-$ is subdivided by the vertex $r_y$ in $U(N)$.  Hence, $r_y$ is a root of $N$ for all such $y$. Thus, $p$ must be a hybrid in $N$. Since the out-degree of a hybrid of $N$ is 1 it follows that $q$ must be the child of $p$ in $N$. Furthermore, the fact that $N$ is in $\cA(X)$
combined with the fact that  $q$ is an interior vertex of $U(N)^-$, implies that $q$ is a tree vertex of $N$. Choose some element $x\in Y$. {Note that $x\not=q$ because $q\not\in Y$.}
Let $N'$ denote the mDAG
with leaf set ${X':=}X-\{x\}$ obtained from $N$ by deleting the vertices $x$ and $r_x$ and the arcs $(r_x, x)$ and $(r_x,p)$ from $N$ (and suppressing $p$ if this has rendered it a vertex with in-degree 1 and out-degree 1). Since $|Y|\geq 2$ it follows that  {$N'\in\cA(X')$}
since  
{$N\in\cA(X)$.}
By induction {hypothesis}, $N'$ is encoded by $\cR(N')\cup \cD(N')$.
Since $q$ is an interior vertex of $U(N)^-$, we have $\cD(N')=\cD(N)$ and 
$\cR(N')=\cR(N)- \{t\in\cR(N)\,:\, x\in L(t)\}$. Since $N$ can be recovered from {$N'$} by adding the vertices $x$ and $r_x$ and the arcs $(r_x,x)$ and $(r_x,p)$ to $N'$ in case $p$ was not suppressed in the construction of $N'$ from $N$ and this only adds triplets in $\cR(N)$ whose leaf set contains $x$, it follows that $N$ is also encoded by $\cR(N)\cup \cD(N)$. If $p$ was suppressed in the construction of $N'$ from $N$, we first {subdivide} the incoming arc of $q$ in $N'$ by a new vertex $p'$ and add the vertices $x$ and $r_x$ and the arcs $(r_x,x)$ and $(r_x,p')$ to $N'$. Similar arguments imply that $N$ is encoded by 
 $\cR(N)\cup \cD(N)$.

{\bf Case $|S|=1$:} Then $p$ must be a hybrid of $N$ and the sole element $s\in S$ is the child of $p$ in $N$ since the out-degree of a hybrid is 1. Choose an element {$x\in Y-S$} which must exist since $|Y|\geq 2$.
Constructing $N'$ from $N$ as {before}
implies that $N'$ is 
{a network in $\cA(X-\{x\})$} 
that is encoded by $\cR(N')\cup \cD(N')$.
Note that contrary to the previous case, $\cD(N')=\cD(N)-\{\duet{x}{s}\}$ and 
$\cR(N')=\cR(N)$. {Clearly,} $N$ can be recovered from $N'$ by adding the vertices $x$ and $r_x$ and the arcs $(r_x, x)$ and $(r_x,p)$ 
in case $p$ was not suppressed in the construction of $N'$ and by subdividing the incoming arc of $s$ in $N'$ by a vertex $p'$ and adding the vertices $x$ and $r_x$ and the arcs  $(r_x, x)$ and $(r_x,p')$ {otherwise}. {Thus,} $N$ is again encoded by $\cR(N)\cup \cD(N)$ since this only adds duets {from} $\cD(N)$  to  $\cD(N')$ that contain $x$.

\begin{figure*}[h]
\hrule
\noindent
{\sc Arboreal Reconstruction Algorithm {(ARA)}:}\\
{\sc Input:} A triplet system, $\cR$, and a duet system, $\cD$, on $X := L(\cR) \cup L(\cD)$ with $n := |X|$.\\
{\sc Output:} A {network $N\in \cA(X)$} with $\cR = \cR(N)$ and $\cD = \cD(N)$ or None, if no network exists.
\hrule
\noindent
\begin{tabular}{r p{4.52in} cr}
1.& Partition $X$
corresponding to disjoint network regions (see Sec.~\ref{sec:preprocess}). &&
    $O(|\cR|+|\cD|) = O(n^3)$\\
2.& Build a scaffold linking the roots to the elements in $X$
(see Sec.~\ref{sec:scaffold_tree} \& Fig.~\ref{fig:scaffold_alg}).&&
    $O(n^3)$\\
3.& For each block in the partition, iteratively refine the tree under each root in that block, merging common subtrees (see Sec.~\ref{sec:refine_partitions} and Fig.~\ref{fig:refine_alg}). &&
    $O(n^3)$\\
4. & Return the network, or None, if no network is possible (see Sec.~\ref{sec:correctness}).&&
    $O(1)$\\
\end{tabular}
\hrule
    \caption{Outline of the {\sc Arboreal Reconstruction Algorithm} {({\sc ARA})} with running time complexity in the right margin.}
    \label{fig:build_alg}
\end{figure*}

{\bf Case $|S|\geq 2$:} Then $p$ must be a tree vertex of $N$. Hence, one of the following two cases must hold: (i) $S=Y$ and so the arc $(q,p)$ or the arcs $a_p:=(r,p)$ and $a_q:=(r,q)$ are contained in $N$ where $r$ is a root of $N$ or (ii) $|S|=|Y|-1$ and $(p,q)$ is an arc in $N$. 

{\em Case~(i):} Choose some $x\in S$. {Then $x\not=q$ since $S\subseteq Y$ and $q\not\in Y$.}
If $|S|\geq 3$, then let $N'$ denote the mDAG
with leaf set ${X':=}X-\{x\}$ obtained from $N$ by deleting $x$ and the arc $(p,x)$. Note that the out-degree of $p$ in $N'$ is still at least 2. Then similar arguments as in the previous cases imply that $N'$ {is a network in $\cA(X')$} 
that is encoded by  $\cR(N')\cup \cD(N')$. Furthermore and independent of whether $(q,p)\in E(N)$ or {the arcs} $a_p$, $a_q$ {are} in $E(N)$, we have 
$\cD(N')=\cD(N)$ and 
$\cR(N')=\cR(N)- \{t\in\cR(N)\,:\, x\in L(t)\}$.
Since $N$ can be recovered from $N'$ by adding $x$ and the arc $(p,x)$, similar arguments as before imply that $N$ is encoded by  $\cR(N)\cup \cD(N)$. 

So assume that $|S|=2$. Let $N'$ denote the mDAG with leaf set ${X'}$ obtained from $N$ by deleting $x$ and the arc $(p,x)$ and suppressing $p$. Then, similar arguments as in the previous cases imply that $N'$ is a
{network in $\cA(X')$} that is encoded by  $\cR(N') \cup \cD(N')$. Let $y$ denote the other element in $S$.  Since $q$ is an interior vertex of $U(N)^-$ it must either be a tree vertex or a hybrid of $N$. 

If ${a_p,a_q}\in E(N)$, then if $q$ is a tree vertex then $\cD(N')=\cD(N)$ and
$\cR(N')=\cR(N)- \{t\in\cR(N)\,:\, x\in L(t)\}$. Otherwise, $q$ is a hybrid and so has a unique child $l$. 
In this case,
$\cD(N')=\cD(N)\cup\{\duet{l}{y}\}$ and
$\cR(N')=\cR(N)- \{l|xy\}$.
Similar arguments {as before} imply that $N$ is encoded by  $\cR(N)\cup \cD(N)$. 

Assume that $(q,p)\in E(N)$. If $q$ is a tree vertex, then we have $\cD(N')=\cD(N)$ and 
$\cR(N')=\cR(N)- \{t\in\cR(N)\,:\, x\in L(t)\}$. 
Similar arguments imply that $N$ is also encoded by  $\cR(N)\cup \cD(N)$. If $q$ is a hybrid of $N$, then $\cR(N')=\cR(N)- \{t\in\cR(N)\,:\, x\in L(t)\}$ holds again. Since $N$ is stack-free, a parent of $q$ in $N$ must either be a tree vertex or a root of $N$. Let $Q$ denote the set of parents of $q$ that are roots in $N$ and whose other child is a leaf of $N$. 
If $Q=\emptyset$ then $\cD(N')=\cD(N)$. Otherwise $\cD(N')=\cD(N)\cup \{\duet{d}{y}\,|\, d\in Q\}$.
Similarly, the result holds.

{\em Case~(ii):} Choose some $x\in S$. {As before, $x\not=y$.} Construct an mDAG $N'$ from $N$ as in the previous cases by removing $x$. Note that the out-degree of $p$ in $N'$ is still at least 2 because $q$ is a child of $p$. Using similar arguments {as before}, it follows that $N'$ is a 
{network in $\cA(X-\{x\})$} that is encoded by $\cR(N') \cup \cD(N')$. Furthermore, $\cD(N')=\cD(N)$ and 
$\cR(N')=\cR(N)- \{t\in\cR(N)\,:\, x\in L(t)\}$. Similar arguments imply that $N$ is  encoded by  $\cR(N)\cup \cD(N)$. 
\end{proof}

\begin{figure*}[t]
    \centering

\begin{tabular}{c c c c c }
    {\tiny
    \begin{tabular}{p{.4325in} | p{.2in}}
        triplets & duets\\
        \hline
        $12|3$, $12|4$, &\\
        $12|5$, 
        $34|1$, $34|2$, $34|6$, $34|7$, $34|8$, $34|9$,
        $35|1$, $35|2$, $35|6$, $35|7$, $35|8$, $35|9$,
        $45|1$, $45|2$, $45|6$, $45|7$, $45|8$, $45|9$,
        $36|8$, $37|9$, $46|8$, $47|9$, $56|8$, $57|9$,
        $67|8$
        &  $\duet{9}{10}$\\
    \end{tabular}
    }
    & \begin{tabular}{c}
    \begin{tikzpicture}[thick,scale=\FigFourScale]   
    \node[outer sep=2pt, fill=black,circle,inner sep=1pt, label=below: {\scriptsize $1$} ] (1) at (1,3){};
    \node[outer sep=2pt, fill=black,circle,inner sep=1pt, label=below: {\scriptsize $2$} ] (2) at (2,3){};
    \node[outer sep=2pt, fill=black,circle,inner sep=1pt, label=below: {\scriptsize $3$} ] (3) at (3,1){};
    \node[outer sep=2pt, fill=black,circle,inner sep=1pt, label=below: {\scriptsize $4$} ] (4) at (3.5,1){};
    \node[outer sep=2pt, fill=black,circle,inner sep=1pt, label=below: {\scriptsize $5$} ] (5) at (4,1){};
    \node[outer sep=2pt, fill=black,circle,inner sep=1pt, label=below: {\scriptsize $6$} ] (6) at (5,2){};
    \node[outer sep=2pt, fill=black,circle,inner sep=1pt, label=below: {\scriptsize $7$} ] (7) at (6,2){};
    \node[outer sep=2pt, fill=black,circle,inner sep=1pt, label=below: {\scriptsize $8$} ] (8) at (7,3){};
    \node[outer sep=2pt, fill=black,circle,inner sep=1pt, label=below: {\scriptsize $9$} ] (9) at (4.75,4.25){};
    \node[outer sep=2pt, fill=black,circle,inner sep=1pt, label=below: {\scriptsize $10$} ] (10) at (7,7.75){};
    \node[align=left] (PP) at (0.5,5) {\scriptsize $P$};
    \node[align=left] (P) at (6.5,8.5) {\scriptsize $P'$};
    \node[draw, opacity=0.1, fill=gray, rounded corners, inner sep=9pt, 
    fit= (1) (2) (3) (4) (5) (6) (7) (8) (9)] {};
    \node[draw, opacity=0.1, fill=gray, rounded corners, inner sep=9pt, 
    fit=(10)] {};
\end{tikzpicture}
    \end{tabular}
    &
    \begin{tabular}{c}
    \begin{tikzpicture}[thick,scale=\FigFourScale]   
    \node[outer sep=0pt, fill=black,circle,inner sep=1pt, label=below: {\scriptsize $1$} ] (1) at (1,3){};
    \node[outer sep=0pt, fill=black,circle,inner sep=1pt, label=below: {\scriptsize $2$} ] (2) at (2,3){};
    \node[outer sep=0pt, fill=black,circle,inner sep=1pt, label=below: {\scriptsize $3$} ] (3) at (3,1){};
    \node[outer sep=0pt, fill=black,circle,inner sep=1pt, label=below: {\scriptsize $4$} ] (4) at (3.5,1){};
    \node[outer sep=0pt, fill=black,circle,inner sep=1pt, label=below: {\scriptsize $5$} ] (5) at (4,1){};
    \node[outer sep=0pt, fill=black,circle,inner sep=1pt, label=below: {\scriptsize $6$} ] (6) at (5,2){};
    \node[outer sep=0pt, fill=black,circle,inner sep=1pt, label=below: {\scriptsize $7$} ] (7) at (6,2){};
    \node[outer sep=0pt, fill=black,circle,inner sep=1pt, label=below: {\scriptsize $8$} ] (8) at (7,3){};
    \node[outer sep=0pt, fill=black,circle,inner sep=1pt, label=right: {\scriptsize $9$} ] (9) at (4.75,4.25){};
    \node[outer sep=0pt, fill=black,circle,inner sep=1pt, label=right: {\scriptsize $10$} ] (10) at (7,7.75){};
    \node[align=left] (PP) at (0.5,5) {\scriptsize $P$};
    \node[align=left] (P) at (6.5,8.5) {\scriptsize $P'$};
    \node[draw, opacity=0.1, fill=gray, rounded corners, inner sep=9pt, 
    fit= (1) (2) (3) (4) (5) (6) (7) (8) (9)] {};
    \node[draw, opacity=0.2, fill=gray, rounded corners, inner sep=2pt, 
    fit= (1) (2)] {};
    \node[draw, opacity=0.2, fill=gray, rounded corners, inner sep=2pt, 
    fit= (3) (4) (5) (6) (7)] {};
    \node[draw, opacity=0.2, fill=gray, rounded corners, inner sep=2pt, 
    fit= (8)] {};
    \node[draw, opacity=0.2, fill=gray, rounded corners, inner sep=2pt, 
    fit= (9)] {};
    \node[draw, opacity=0.1, fill=gray, rounded corners, inner sep=9pt, 
    fit=(10)] {};
    \draw(10)--(9) node [midway,above] {\tiny $r_0$}; 
    \draw(2.325,3.25)--(3.5,2.5) node [midway,above] {\tiny $r_1$}; 
    \draw(4.125,2.5)--(4.5,3.9) node [midway,right] {\tiny $r_2$}; 
    \draw(5.5,2.5)--(6.9,2.9) node [midway,above] {\tiny $r_3$}; 
       
\end{tikzpicture}
    \end{tabular}
    &
    \begin{tabular}{c}
    \begin{tikzpicture}[thick,scale=\FigFourScale]   
    \node[outer sep=2pt, fill=black,circle,inner sep=1pt, label=below: {\scriptsize $1$} ] (1) at (1,3){};
    \node[outer sep=2pt, fill=black,circle,inner sep=1pt, label=below: {\scriptsize $2$} ] (2) at (2,3){};
    \node[outer sep=2pt, fill=black,circle,inner sep=1pt, label=below: {\scriptsize $3$} ] (3) at (3,1){};
    \node[outer sep=2pt, fill=black,circle,inner sep=1pt, label=below: {\scriptsize $4$} ] (4) at (3.5,1){};
    \node[outer sep=2pt, fill=black,circle,inner sep=1pt, label=below: {\scriptsize $5$} ] (5) at (4,1){};
    \node[outer sep=2pt, fill=black,circle,inner sep=1pt, label=below: {\scriptsize $6$} ] (6) at (5,2){};
    \node[outer sep=2pt, fill=black,circle,inner sep=1pt, label=below: {\scriptsize $7$} ] (7) at (6,2){};
    \node[outer sep=2pt, fill=black,circle,inner sep=1pt, label=below: {\scriptsize $8$} ] (8) at (7,3){};
    \node[outer sep=2pt, fill=black,circle,inner sep=1pt, label=below: {\scriptsize $9$} ] (9) at (4.75,5.25){};
    \node[outer sep=2pt, fill=black,circle,inner sep=1pt, label=below: {\scriptsize $10$} ] (10) at (7,7.75){};
    \node[outer sep=2pt, fill=black,circle,inner sep=1pt, label=above: {\scriptsize $r_0$} ] at (5,8.5){};
    \node[draw, opacity=0.2, fill=gray, rounded corners, inner sep=2pt, 
    fit= (3) (4) (5) (6) (7)] {};
    \node[outer sep=2pt, fill=black,circle,inner sep=1pt] at (4.75,6){};
    \draw(5,8.5)--(7,7.75);
    \draw(5,8.5)--(4.75,5.75);
    \draw(4.75,6)--(4.75,5.25);
    \node[fill=black,circle,inner sep=1pt, label=right: {\tiny $r_1$}] at (1.75,5.25){};
    \draw(1.75,5.25)--(1.5,3.5);
    \draw(1.75,5.25)--(3.5,2.5);
    \node[fill=black,circle,inner sep=1pt] at (1.5,3.5){};
    \draw(1,3)--(1.5,3.5);
    \draw(2,3)--(1.5,3.5);
    \node[fill=black,circle,inner sep=1pt, label=above: {\tiny $r_2$}] at (3.75,6.75){};
    \draw(3.75,6.75)--(3.75,2.5);
    \draw(3.75,6.75)--(4.75,6);
    \node[fill=black,circle,inner sep=1pt, label=right: {\tiny $r_3$}] at (6.75,5.25){};
    \draw(6.75,5.25)--(5.5,2.5);
    \draw(6.75,5.25)--(7,3);
\end{tikzpicture}
    \end{tabular}
    &
    \begin{tabular}{c}
    \begin{tikzpicture}[thick,scale=\FigFourScale]   
    \node[outer sep=2pt, fill=black,circle,inner sep=1pt, label=below: {\scriptsize $1$} ] (1) at (1,3){};
    \node[outer sep=2pt, fill=black,circle,inner sep=1pt, label=below: {\scriptsize $2$} ] (2) at (2,3){};
    \node[outer sep=2pt, fill=black,circle,inner sep=1pt, label=below: {\scriptsize $3$} ] (3) at (3,1){};
    \node[outer sep=2pt, fill=black,circle,inner sep=1pt, label=below: {\scriptsize $4$} ] (4) at (3.5,1){};
    \node[outer sep=2pt, fill=black,circle,inner sep=1pt, label=below: {\scriptsize $5$} ] (5) at (4,1){};
    \node[outer sep=2pt, fill=black,circle,inner sep=1pt, label=below: {\scriptsize $6$} ] (6) at (5,2){};
    \node[outer sep=2pt, fill=black,circle,inner sep=1pt, label=below: {\scriptsize $7$} ] (7) at (6,2){};
    \node[outer sep=2pt, fill=black,circle,inner sep=1pt, label=below: {\scriptsize $8$} ] (8) at (7,3){};
    \node[outer sep=2pt, fill=black,circle,inner sep=1pt, label= below: {\scriptsize $9$} ] (9) at (4.75,5.25){};
    \node[outer sep=2pt, fill=black,circle,inner sep=1pt, label=below: {\scriptsize $10$} ] (10) at (7,7.75){};
    \node[outer sep=2pt, fill=black,circle,inner sep=1pt, label=above: {\scriptsize $r_0$} ] at (5,8.5){};
    \node[draw, opacity=0.2, fill=gray, rounded corners, inner sep=2pt, 
    fit= (3) (4) (5) (6) (7)] {};
    \node[outer sep=2pt, fill=black,circle,inner sep=1pt] at (4.75,6){};
    \draw(5,8.5)--(7,7.75);
    \draw(5,8.5)--(4.75,5.75);
    \draw(4.75,6)--(4.75,5.25);
    \node[fill=black,circle,inner sep=1pt, label=right: {\tiny $r_1$}] at (1.75,5.25){};
    \draw(1.75,5.25)--(1.5,3.5);
    \draw(1.75,5.25)--(3.5,2.5);
    \node[fill=black,circle,inner sep=1pt] at (1.5,3.5){};
    \draw(1,3)--(1.5,3.5);
    \draw(2,3)--(1.5,3.5);
    \node[fill=black,circle,inner sep=1pt, label=above: {\tiny $r_2$}] at (3.75,6.75){};
    \draw(3.75,6.75)--(3.5,2.5);
    \draw(3.75,6.75)--(4.75,6);
    \node[fill=black,circle,inner sep=1pt, label=right: {\tiny $r_3$}] at (6.75,5.25){};
    \draw(6.75,5.25)--(5.5,2.5);
    \draw(6.75,5.25)--(7,3);
    \node[fill=black,circle,inner sep=1pt] at (3.5,2.5){};
    \node[fill=black,circle,inner sep=1pt] at (3.5,2){};
    \draw(3.5,2.5)--(3.5,2);
    \draw(3,1)--(3.5,2);
    \draw(3.5,1)--(3.5,2);
    \draw(4,1)--(3.5,2);
\end{tikzpicture}
    \end{tabular}\\
    
    (a) & (b) & (c) & (d) & (e)\\
\end{tabular}
    
    \caption{{\sc An {\sc ARA} Example}: 
        (a) The {systems $\cR(N)$ and $\cD(N)$ for the network $N$ from Fig.~\ref{fig:encoding-general}(a)}.
        {After Line 1:}  (b) The $(\cD(N),\cR(N))$-induced partition of the leaves with blocks $P$ and $P'$
        with $P = \{1,2,\ldots,9\}$ and $P' =  \{10\}$.
        {After Line 2:} (c) we have edges for each root induced by the duet ($r_0$) and the triplets ($r_1,r_2,r_3$). $P'$ has a single component while there are three components of $P$. {After Line 3:}
        (d) The components with one or two leaves refined{, and  
        (e) refining} the component with leaf set $\{3,4,5,6,7\}$ is done by reconstructing the structure below each root.  First, the tree on the {leaf set} below $r_1$, $L_{r_1} = \{3,4,5\}$, is constructed. Since $L_{r_1} = L_{r_2}$, {their subtrees} are identical. Merging results in the subtree below a hybrid.  Next, the tree under $r_3$ on $L_{r_3} = \{3,4,5,6,7\}$ is reconstructed and merged, {also} resulting in a hybrid. {Since all blocks have now been processed, the generated network (which is the network in Fig~\ref{fig:encoding-general}(a)) is returned in Line 4}.
    }
    \label{fig:build_alg_ex}
\end{figure*}

\section{{Reconstructing Arboreal Networks}}

{We present an algorithm, {{\sc ARA},} 
that efficiently reconstructs an arboreal network from them, if one exists, given the basic building blocks of duets and triplets.
It relies on our combinatorial classification (Theorem~\ref{the:encoding}) and the work of Bryant \cite{bryant}.}
Bryant \cite{bryant} used the {\sc Build} algorithm of Aho {\em et al.}~\cite{aho81} to determine, in polynomial time, if a triplet system is compatible with a rooted phylogenetic tree.  Bryant further showed that if the triplet system is not compatible, finding a maximal set of compatible triplets is computationally hard. The {\sc Build} algorithm first constructs a graph  whose components correspond to the subtrees of the root of {an} envisaged tree. 
If there is only one component, then the root would have only one child which is not possible, and the algorithm halts. 
Otherwise, the process is recursively applied to the subcomponents until the tree is resolved. If the triplets are not compatible, no tree is built and the algorithm halts.   A tree can have {\bf polytomies} i.\,e.\,vertices with out-degree larger than 3, {if  information about the 
relationships in the 
triplets is limited.}

We can show similar results for arboreal networks and duets and triplets, with a subtle difference.  In the case of arboreal networks, the absence of triplets and duets is itself a statement about the possible network.  For the triplet system $\mathcal C =\{12|3, 12|4\}$ in Fig.~\ref{fig:absense-of-evidence}(a), any binary rooted phylogenetic tree that induces 
$\mathcal C$ also induces one of $t_1 = 24|3$ or $t_2 = 23|4$. Thus, there exists no such
tree 
$T$ such that $\cR(T)=\mathcal C$. With multiple roots allowed, 
there is an arboreal network $N^b$ (Fig.~\ref{fig:absense-of-evidence}(b)) such that $\cR(N^b)= \mathcal C$.
Fig.~\ref{fig:build_alg} gives an overview of {\sc ARA}
with the details of each step provided in the subsequent sections and figures.
We illustrate {its inner workings}
in Fig.~\ref{fig:build_alg_ex}.

\subsection{{A} Partition Induced by Duets and Triplets}
\label{sec:preprocess}

Given a duet system $\cD$ and a triplet system $\cR$, we first {find a partition of
$X:=L(\cD) \cup L(\cR)$ whose}
blocks ({see the shaded areas in Figure~\ref{fig:build_alg_ex}(b)) {can be processed independently.}}
{To this end,} {we call a partition {$\mathcal P$}
of $X$ a {\em $(\cD,\cR)$-induced partition} if, for all duets {$\duet{x}{y}\in \cD$, there is no block in $\mathcal P$ that contains $\{x,y\}$, and for all triplets $t\in\cR$, there exist a block in $\mathcal P$ that contains $L(t)$}.   {Note that $\mathcal P$} is well-defined} since if $\duet{x}{y} \in \cD$ then, by definition of a duet, there is no $z\in X$ such that $xy|z\in\cR$.  
{We use this type of partition}
as a {framework} to the later subroutines that {make up {\sc ARA} and compute it in Proposition~\ref{prop:duet-induced}.}
{The blocks in the partition we then refine further} 
into components connected by edges that correspond to the roots in {the envisaged network}.

{The next result is a consequence of Theorem~\ref{the:encoding}.}

\begin{proposition}
    \label{prop:duets}
    Let $N$ be an arboreal network. Then, each duet in $\mathcal D(N)$ gives rise to a root of $N$.   
\end{proposition}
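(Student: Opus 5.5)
We argue directly from the definition of an induced duet. The goal is to show that if $\duet{x}{y}\in\cD(N)$, then the unique path in $U(N)$ between $x$ and $y$ passes through a root $r$ of $N$ whose two children lie on that path. This is what ``gives rise to a root'' should mean: $r$ is determined by the pair $\{x,y\}$, and it is the vertex the reconstruction will later place as the root joining the blocks of $x$ and $y$.

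Since $U(N)$ is a tree, the path $P$ joining $x$ and $y$ in $U(N)$ is unique. By definition of a duet, $P$ crosses precisely one vertex $v$ that has degree two in $U(N)$. In an $m$-network, every vertex of degree two in $U(N)$ is either a root (in-degree $0$, out-degree $2$) or a vertex of in-degree $1$ and out-degree $1$. The latter are excluded by definition. Hence $v$ is a root of $N$, and its two children are its two neighbours on $P$.

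It remains to check that $v$ genuinely separates $x$ from $y$ in the rooted sense, so that no root has both $x$ and $y$ below it in a way that yields a triplet. For this we orient $P$. Since $x$ and $y$ are leaves, the arcs at the two ends of $P$ point towards $x$ and towards $y$. Moving inward from each end, we claim the arcs keep pointing outward until a vertex of in-degree $0$ is reached. Suppose instead that the direction changes at some vertex $w$ of $P$ other than $v$. Then $w$ has either two incoming arcs on $P$, making $w$ a hybrid, or two outgoing arcs on $P$, making $w$ a source. A source of out-degree $2$ has degree two in $U(N)$, so $w=v$. A source of larger out-degree is not permitted, since roots have out-degree $2$. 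This leaves the possibility that $P$ contains a hybrid as an intermediate vertex, with direction reversals on both sides of it.

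The main obstacle is precisely this case: $P$ may pass through hybrids, and then several sources could lie on $P$. We would handle it by counting. Each direction reversal on $P$ alternates between a source and a sink, and the two end arcs point outward toward $x$ and $y$. Therefore the number of sources on $P$ exceeds the number of interior sinks (hybrids with both parents on $P$) by exactly one. Every source on $P$ is a root, and roots have degree two, so each such source contributes a degree-two vertex on $P$. Because $P$ crosses exactly one degree-two vertex, $P$ has exactly one source, namely $v$, and no interior sink. Thus $P$ consists of two directed paths from $v$ to $x$ and from $v$ to $y$, with $v$ the root that gives rise to the duet. The additional condition that no triplet of $\cR(N)$ contains both $x$ and $y$ confirms that $v$ is the only root above both $x$ and $y$ that could be their least common ancestor. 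This is consistent with Lemma~\ref{lem:encode-banyan} and the proof of Theorem~\ref{the:encoding}, where duets arise exactly from roots whose children lead to $x$ and $y$.
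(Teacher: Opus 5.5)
Your second paragraph is sound, and it already proves the weak reading of the statement: the unique degree-two vertex $v$ on the $x$--$y$ path $P$ in $U(N)$ is a root, and both of its children lie on $P$. (You skip the case of in-degree $2$ and out-degree $0$. It is also excluded, since such a vertex is neither a leaf, a root, a hybrid nor a tree vertex.)

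The gap is in what follows. The step ``two outgoing arcs on $P$, making $w$ a source'' is false: a tree vertex whose parent lies off $P$ also has both of its $P$-arcs pointing away from it. Your alternation count therefore only relates local sources and local sinks of $P$. It does not make local sources roots, so the conclusion that $P$ is the union of directed paths from $v$ to $x$ and to $y$ does not follow. In fact it is false for the definition of an induced duet as literally written. Take $N^c$ of Fig.~\ref{fig:mountain-range}(c): the root $r_1$ has children $3$ and a tree vertex $t$; $t$ has children $1$ and a hybrid $h$; $h$ has child $2$ and second parent the root $r_2$, whose other child is $4$. The path $1\leftarrow t\rightarrow h\leftarrow r_2\rightarrow 4$ crosses exactly one degree-two vertex, and $\cR(N^c)=\{12|3\}$ contains no triplet with both $1$ and $4$. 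Yet $1\notin L(r_2)$. The vertex $t$ is precisely the non-root local source your argument overlooks.

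The paper lists $\cD(N^c)=\{\duet{2}{4}\}$ only. Under the literal reading, $N^c$ and $N^d$ would both have duets $\duet{1}{4},\duet{2}{4}$ and triplet $12|3$, contradicting Theorem~\ref{the:encoding}. So the intended notion is the rooted one from the introduction: a duet additionally requires $x,y\in L(r)$ for some root $r$.

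Under that hypothesis, the directedness comes from a short least-common-ancestor argument rather than from counting. The part of $N$ below $r$ is a rooted tree, because $U(N)$ is a tree. Suppose the least common ancestor $u$ of $x$ and $y$ there were not $r$. Then $u$ would be a tree vertex, and $P$ would consist of the directed paths from $u$ to $x$ and to $y$. Every vertex on these paths other than $x,y$ has degree at least $3$ in $U(N)$, contradicting that $P$ crosses a degree-two vertex. Hence $u=r$. Moreover, any third leaf $z\in L(r)$ would yield $xz|y$ or $yz|x$, so $L(r)=\{x,y\}$.

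This gives the injective duet-to-root correspondence that the algorithm later relies on (``a root is added to $N'$ for each duet''). The literal reading cannot give it, since in $N^c$ both $\duet{1}{4}$ and $\duet{2}{4}$ would pass through the same root $r_2$. So either stop after your second paragraph, or adopt the rooted reading and replace your orientation count with this argument.
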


{Continuing with the notation in Fig.~\ref{fig:build_alg_ex}, a $(\cD,\cR)$-partition  can be found by an algorithm whose running time}
is dominated by $|\cR|$
which is at most $O(n^3)$:  

\begin{proposition}\label{prop:duet-induced}
    Let $\cR$ be a triplet system and let  $\cD$ be a duet system such that $X=L(\cR)\cup L(\cD)$. Then a $(\cD,\cR)$-induced partition 
    can be computed in $O(|\cR \cup \cD|) = O(|X|^3)$ time.
\end{proposition}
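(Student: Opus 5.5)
The plan is to compute a canonical $(\cD,\cR)$-induced partition: the connected components of the graph $G$ on $X$ obtained by joining, for every triplet $t\in\cR$, the three elements of $L(t)$ pairwise (it suffices to add two edges per triplet, say $xy$ and $xz$ for $t=xy|z$). I will argue that these components form a $(\cD,\cR)$-induced partition whenever one exists, and that they can be found with a union-find structure or a single graph traversal in time linear in $|\cR\cup\cD|$.

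\textbf{Step 1 (construction).} Take the vertex set $X=L(\cR)\cup L(\cD)$. Elements that occur only in duets become isolated vertices. For each $t=xy|z\in\cR$ add the edges $\{x,y\}$ and $\{x,z\}$, so $G$ has $O(|\cR|)$ edges. Computing $X$ requires one pass over $\cR$ and $\cD$. After relabelling elements to $\{1,\dots,n\}$, a BFS/DFS labels the components of $G$ in $O(|X|+|\cR|)$ time.

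\textbf{Step 2 (triplet condition).} By construction, $L(t)$ lies in one component for every $t\in\cR$. This component partition is also the finest partition with this property: any partition in which each $L(t)$ lies inside a block must merge the vertices of every edge of $G$, and hence every component of $G$. So if any $(\cD,\cR)$-induced partition exists, the component partition is one. The reason is that refining a partition cannot create a block that contains a pair $\{x,y\}$ with $\duet{x}{y}\in\cD$.

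\textbf{Step 3 (duet condition).} For each $\duet{x}{y}\in\cD$, check in $O(1)$ that $x$ and $y$ have different component labels. This totals $O(|\cD|)$. The well-definedness remark preceding Proposition~\ref{prop:duet-induced} handles the direct conflict, where a triplet $xy|z$ would put $x$ and $y$ together. If the check fails through longer chains of triplets, no induced partition exists, and the algorithm reports this, which then becomes a ``None'' output of {\sc ARA}.

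\textbf{Step 4 (bounds).} The total time is $O(|X|+|\cR|+|\cD|)$. Since every element of $X$ appears in some triplet or duet, $|X|\le 3|\cR|+2|\cD|$, so the time is $O(|\cR\cup\cD|)$. Finally, there are at most $3\binom{|X|}{3}$ triplets and $\binom{|X|}{2}$ duets on $X$, which gives $O(|X|^3)$.

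The main obstacle is interpretive rather than computational. The statement asserts that such a partition can be computed, yet a chain of triplets might force $x$ and $y$ into one block even though $\duet{x}{y}\in\cD$. I would address this with the minimality argument of Step 2: the component partition is the finest admissible one, so it is correct exactly when any admissible partition exists. When the input comes from a network $N\in\cA(X)$, I expect every component of $G$ to lie under a connected set of roots. I would then use Proposition~\ref{prop:duets} and the definition of a duet to argue that $x$ and $y$ in a duet cannot be linked through triplets, so that no failure occurs.
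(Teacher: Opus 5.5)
Your Steps 1--4 are correct and follow essentially the paper's route. The paper also collects triplet-connected classes by breadth-first search over a leaf-to-triplet lookup table, seeded at the two endpoints of each duet. It returns None when the search from one endpoint of a duet reaches a leaf already labelled by the other. Your version is, if anything, cleaner. Labelling all components of $G$ first, testing every duet against the final labels, and your minimality argument together justify the None output and make the result independent of the order in which duets are processed.

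One correction, though: drop the closing expectation that, for input $(\cD(N),\cR(N))$ with $N\in\cA(X)$, the endpoints of a duet are never linked through triplets. It is false, and your proof does not need it. Take roots $\rho_1,\rho_2$ and arcs $\rho_1\to a$, $\rho_1\to u$, $u\to x$, $u\to h$, $\rho_2\to h$, $h\to c$, $\rho_2\to w$, $w\to y$, $w\to y'$. This is a stack-free arboreal network: its only hybrid $h$ has a leaf as child, and $U(N)^-$ is a phylogenetic tree with interior vertices $u,h,w$. Here $\cR(N)=\{xc|a,\ yy'|c\}$. Since $x$ and $y$ lie below no common root, no triplet contains both. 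The $x$--$y$ path $x,u,h,\rho_2,w,y$ in $U(N)$ passes through exactly one degree-two vertex, namely $\rho_2$, so $\duet{x}{y}\in\cD(N)$. But $x$ and $y$ are joined through $c$ in your graph $G$, so no $(\cD(N),\cR(N))$-induced partition exists at all. Thus the failure branch of your Step 3 can be triggered even by network-derived input. The proposition, read as you read it (produce such a partition whenever one exists, otherwise report that none does), is unaffected.
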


\begin{proof}
We first initialize several look-up tables for more efficient access.  We set up a table of triplets with values, $unused$, to keep track of which triplets we have used  and  a table, $belong$, of triplets in terms of leaves:  for each $xy|z\in\cR$, append $xy|z$ to $belong[x]$, to $belong[y]$, and to $belong[z]$.  The initialization takes $O(|\cR|) = O(|X|^3)$ and allows us to find the triplets in constant time in the steps below. We also initialize {to 0 }a table $which$ of the leaves. 

For each duet, $\duet{a}{b} \in \cD$, we set up queues $q_a = (a)$ and $q_b = (b)$.  We remove the first item, $x$, from $q_a$. We set $which[x] = a$ and add the leaves of all unused triplets involving $x$ to $q_a$, marking {them}
as used.  We repeat these steps until $q_a$ is empty.  We similarly work through $q_b$ with an extra check {to see} if a leaf has already been {assigned to}  $a$.  If so, we return None. 

{Note that after working through all duets, every leaf must  have been assigned, since every leaf must be a part of some triplet or duet.}
We process each duet and triplet once and the resulting {$(\cD,\cR)$}-partition of $X$ has $|\cD|+1$ blocks and can be determined in linear time in the size of $\cR$.
\end{proof}

{Since the underlying graph of an arboreal network is a tree, each part of the underlying graph is acyclic and connected.  As such, 
a necessary condition for a $(\cD,\cR)$-induced partition, ${\cP}$, to be induced by an arboreal network is that $\cD$ induces a connected, acyclic graph on the blocks of $\cP$.}

\subsection{Initializing the Scaffold}
\label{sec:scaffold_tree}

We build up the envisaged network using a tree 
as a scaffold.  This scaffold starts with the edges, representing the roots induced by 
the duets in $\cD$ and triplets in $\cR$.
To do this efficiently, we compute a look-up table of the roots that will be above each element in $X$, an augmented list of edges corresponding to the roots induced by the duets and by the triplets, and the induced components.

\begin{figure*}[h]
\hrule
{\small
\noindent
{\sc Scaffold Subroutine:}\\
{\sc Input:} Lists, $\cD$, of duets and $\cR$ of triplets, and a $(\cD,\cR)$-induced partition, $\mathcal P$, of $X:=L(\cD)\cup L(\cR)$.\\
{\sc Output:} A list, $comps$, of triplet systems, a list, $re$, of {root} edges  and an ancestor (look-up) table, $anc$, for $X$.
\hrule
\smallskip
\noindent
\begin{tabular}{r p{4.25in} crrr}
1.& Initialize $comps = \emptyset$, $E = \emptyset$, $re = \emptyset$, $\forall l \in X$, $anc[l]=\emptyset$.  &&&&{ $O(n)$}\\

2.& For each  $ab|c \in \cR$: Append $\{a,b\}$ to $E$.
&&&&$O(|\cR|)$\\
3.& For each  $\duet{a}{b}\in \cD$: Append $\{a,b\}$ to $E$, $re$, $anc(a)$, \& $anc(b)$. 
&&&&$O(|\cD|)$\\

4.& For each block $P \in \mathcal P$:
&&&&
$O(n^2)$\\
5.& \quad Construct graph, $G_P = (P,\{\{u,v\}\in E : u,v\in P\})$.&&& {\small $O(|P|^2)$}\\
6.&\quad If $G_P$ is connected and $|P|>1$, the construction halts, as in~\cite{aho81}.&&&{\small $O(|P|)$}\\
7.&\quad Else: Let $C_P = \{C_{P,1},\ldots,C_{P,k}\}$ be the component leaf sets of $G_P$.
&&&{\small $O(|P|)$}\\
8.&\quad\quad Append the set $C_P$ to $comps$.  &&&{\small $O(|P|)$}\\
9.&\quad\quad For each distinct $C,C'\in C_P$:  &&&
{\small $O(|P|^2)$}\\
10.&\quad\quad\quad If $ab|c \in\cR$, $a\in C$, and $c\in C'$, &&{\small $O(1)$}\\ 
11.&\quad\quad\quad\quad Add edge $e = \{C,C'\}$ to $re$.&&{\small $O(1)$}\\
12.&\quad\quad\quad\quad Append $e$ to $anc[a]$ and $anc[b]$.&&{\small $O(1)$}\\
13.& Return $re$, $comps$, and $anc$.&&&& $O(1)$\\
\end{tabular}
\hrule
}
    \caption{Outline of the {\sc Scaffold} subroutine with running time complexity $O(|\cD| + |\cR|) = O(n^3)$.  Bounds on the running time of each line is in the margin  with loops containing the complexity of all nested operations. }
    \label{fig:scaffold_alg}
\end{figure*}

We show for {a network, $N\in\cA(X)$}, that the {\sc Scaffold} subroutine {{returns the correct structure when given $\cD(N)$, $\cR(N)$ and 
the $(\cD(N),\cR(N))$-induced partition of $X$}.  More precisely,} 
we define the {\bf component graph}, {$G(N)$,} of an arboreal network $N$ as follows:  Let $C(N)$ be the partition of $X$
when all degree two vertices, and their adjacent edges are removed from $N$.  Let $EC(N)$ be the set of removed edges{, that is:} 
$\{C,C'\}{\subseteq C(N)}$ is an edge {precisely} if there is a degree 2 vertex that is adjacent to $C$ and $C'$.
Let $G(N) = (C(N),EC(N))$.
Note that we will use the ancestor table $anc$ built in Line 12 in the {\sc Refine} subroutine and show the correctness there.

\begin{proposition}
    Let $N$ be an arboreal network on $X$.  
    Let {$re$, $comps$}
    and $anc$ result from running the {\sc Scaffold} subroutine on the $(\cD(N),\cR(N))$-induced partition. 
    Let $G'=(comps, re)$. Then, $G(N) \simeq G'$.
    \label{prop:comp_graph}
\end{proposition}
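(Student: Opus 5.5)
The plan is to exhibit an explicit bijection between the vertex sets $C(N)$ and $comps$, and then show that it carries $EC(N)$ onto $re$. First I will pin down the structure of $G(N)$. In an arboreal network the degree-two vertices of $U(N)$ are exactly the roots, since roots have out-degree 2 and every other non-leaf vertex has degree at least 3. Removing all roots therefore splits the tree $U(N)$ into subtrees. Each subtree is hanging below one or more roots, and each $C\in C(N)$ is the leaf set of one such piece. An edge $\{C,C'\}\in EC(N)$ corresponds to a root $r$ whose two children lie in the pieces carrying $C$ and $C'$. Because $U(N)$ is a tree, $G(N)$ is itself a tree, and distinct roots give distinct edges.

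Next I will relate triplets and duets to this picture.

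\emph{Claim A:} if $xy|z\in\cR(N)$, then $x$ and $y$ lie in the same block of $C(N)$. The cherry $\{x,y\}$ of the triplet is witnessed by a non-root vertex $v=\mathrm{lca}(x,y)$ below some root, and the paths from $v$ to $x$ and to $y$ avoid roots.

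\emph{Claim B:} conversely, if $x,y$ lie in the same block $C$ with $|C|\ge 2$, then the Build-type graph on $C$ connects them. I will argue inductively inside a piece, in the manner of the proof of Theorem~\ref{the:encoding}. Every non-root interior vertex $v$ of the piece lies below some root $r$ that also has a leaf outside $L(v)$. So any two leaves $a,b\in L(v)$ with $\mathrm{lca}=v$ give a triplet $ab|z$ with $z$ taken from the other side of $r$. Connectivity then follows by walking along the piece.

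\emph{Claim C:} every duet and every triplet "across" edge $\{C,C'\}$ found in Line 10 comes from a root. For a duet $\duet{x}{y}$, the definition (a path crossing exactly one degree-two vertex) says $x$ and $y$ sit in adjacent pieces, which is Proposition~\ref{prop:duets}. For $ab|c$ with $a\in C$ and $c\in C'$, $C\neq C'$, the root $r$ witnessing the triplet has $a$ and $c$ on opposite sides. This gives the edge $\{C,C'\}$.

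With these claims the bijection follows. The $(\cD(N),\cR(N))$-induced partition groups the pieces joined by triplet edges. Claims A and B show that within each block $P$ the components of $G_P$ computed in Line 7 are exactly the sets $C\in C(N)$ contained in $P$. Lines 3 and 11 add exactly the root edges: every root $r$ either has a leaf child adjacent to a one-leaf piece, producing a duet, or both of its sides contain leaves that yield a triplet, so some edge is added; and by Claim C no spurious edges appear. I must also check that the halting condition in Line 6 never fires on genuine input. This holds because each block with $|P|>1$ contains a root whose two sides separate $G_P$.

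The main obstacle is Claim B together with the completeness half of the edge correspondence. These require ruling out pieces containing hybrids where the needed third leaf $z$ might fail to exist. This is where stack-freeness, and out-degree $2$ of roots, enter; I would handle it by choosing $z$ on the other child of the root directly above the hybrid's parent. A secondary subtlety is that several roots can join the same pair of pieces. This would create parallel edges, so I will argue that acyclicity of $U(N)$ forbids it, making the map on edges injective.
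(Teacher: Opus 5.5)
Your proof is correct in substance, and it takes a genuinely different route from the paper.

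\textbf{Comparison with the paper.} The paper inducts on the number of components of $G(N)$. It peels off one root at a time: it cuts at a duet if $\cD(N)\neq\emptyset$, and otherwise at a pendant subtree hanging from a degree-two vertex. It then applies the hypothesis to the two sides and notes that {\sc Scaffold} adds exactly one connecting edge (Line 3 or Line 11).

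You instead characterize the output directly:
\begin{itemize}
\item the root-free pieces of $U(N)$ are exactly the connected components of the cherry graph of Lines 2 and 5 (Claims A and B);
\item the duet pairs of Line 3 and the crossing triplets of Line 10 are exactly witnesses of roots (Claim C plus completeness).
\end{itemize}
This makes explicit what the paper's inductive step takes for granted, namely that restricting $N$ to one side of a root is compatible with the induced triplets, the duets and the partition. It also works for \emph{any} $(\cD(N),\cR(N))$-induced partition rather than a particular one, and it matches the vertex sets of $G(N)$ and $G'$ literally. The paper's route is shorter, but it leaves those compatibility facts unchecked.

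\textbf{Points to tighten.}

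First, your case split for completeness is not exhaustive as worded. Consider a root whose two children are hybrids, each lying above a single leaf. It has no leaf child and yields no crossing triplet, yet it does yield a duet. Split on $|L(r)|$ instead:
\begin{itemize}
\item If $|L(r)|\ge 3$, one child $c_1$ has two leaves $a,b$ below it. Any $z\in L(c_2)$ then gives $ab|z\in\cR(N)$ crossing $r$, which is picked up in Line 11.
\item If $L(r)=\{x,y\}$, every root with $x$ and $y$ below it equals $r$. This is because the $x$--$y$ path in the tree $U(N)$ passes through $r$, and a vertex of in-degree $0$ can occur on it only as its top vertex. Hence no triplet contains both $x$ and $y$, so $\duet{x}{y}\in\cD(N)$, which is picked up in Line 3.
\end{itemize}

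Second, stack-freeness is never needed. This matters because the proposition is stated for all arboreal networks.
\begin{itemize}
\item In Claim B, the leaf sets below the two children of any root above $v$ are disjoint, since $U(N)$ is a tree. So $z$ always exists.
\item Connectivity of a piece then follows by walking the undirected path between two of its leaves. Its local maxima are tree vertices $p_i$. Its local minima are hybrids $h_i$ with $\emptyset\neq L(h_i)\subseteq L(p_i)\cap L(p_{i+1})$.
\end{itemize}

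Third, Line 6 has a cleaner argument than the one you give. A block containing a piece with at least two leaves also contains the far leaf $z$ of such a triplet, and $z$ lies in another piece. A block made only of singleton pieces has no edges in $G_P$ at all.

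Finally, with the duet definition read literally, several duets can record the same root. So $G'$ should be read as a simple graph. Your acyclicity remark handles only the opposite direction, namely that distinct roots never join the same pair of pieces.
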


\begin{proof} 
{We use} induction on the number $c$ of components of $G(N)$.
Let $\mathcal P(N)$ be the $(\cD(N),\cR(N))$-induced partition of $X$. 
The base case: $c = 2$. Then $N$ has a single root, and we have two cases.  The first case is that $N$ induces only a single duet, $\duet{a}{b}$ and no triplets.  The edge $r=\{a,b\}$ is added to $E$, $re$, $anc(a)$ and $anc(b)$ in Line 3 of Fig.~\ref{fig:scaffold_alg}, while $comps$ is updated to $\{\{a\},\{b\}\}$ in Line 8. The algorithm returns $comps = \{\{a\},\{b\}\}$, $re = \{r\}$, and the ancestor tables $anc[a]=r$ and $anc[b]=r$.  The resulting graph $G'=(comps, re) = (\{\{a\},\{b\}\},\{r\})$ is exactly $G(N)$. 
The second case is $|\cD(N)|=0$.  
Then $\mathcal P(N)$ has a single block, and $G(N)$ has two vertices $C$ and $C'$ and a single edge joining them. 
Then, $N$ is a rooted phylogenetic tree, and Line 2 of Fig.~\ref{fig:scaffold_alg} yields two maximal cliques on $X$. Thus, $G_P$ has two component leaf sets $C_{P,1}$ and $C_{P,2}$, and $\{C_{P,1}, C_{P,2}\}$ is appended to $comps$ in Line 8.  Since $|X|\ge 3$, there is a triplet that crosses the root of $N$. Thus, in Line 11, the edge $\{C_{P,1},C_{P,2}\}$ is added to $re$. Since both $G(N)$ and the graph $G=(comps,re)$ have two vertices and one edge, they are isomorphic.

Inductive Step:  Assume true for $c$ and show true for $c+1$. 
Again, we have two cases:  The first case is $|\cD(N)| > 0$. Then, $|\cP(N)| > 1$.  Let $\duet{a}{b}\in\cD(N)$ and {let} $N'$ and $N''$ be the restrictions of $N$ to each side of the duet. 
The number of components in $N$ and in $N'$ is at most $c$. By inductive hypothesis, $G(N')$ and $G(N'')$ are isomorphic to the graphs returned by {the} {\sc Scaffold} {subroutine}. 
Since the blocks are processed independently of each other in 
Fig.~\ref{fig:scaffold_alg}, the resulting graph is $G(N')$ and $G(N'')$ with the additional edge $\{a,b\}$ added (Line 3) {which is isomorphic with} $G(N)$.

The second case is $|\cD(N)| = 0$.  Then, $\cP(N)$ has a single block, $P$.  Since $N$ is an arboreal network, there exists a subtree, $T$, of $U(N)$ which is adjacent to a vertex of degree $2$.  Since $T$ and the 
arboreal network $N|_{X-L(T)}$ obtained by restricting $N$ to ${X-L(T)}$ 
both have at most $c$ components and the inductive hypothesis applies, the respective component graphs and the graphs resulting from Fig.~\ref{fig:scaffold_alg} are isomorphic.  By choice of $T$, there is a single edge connecting $T$ and  $N|_{X-L(T)}$ in $G(N)$  as well as constructed in Line 10, yielding the desired isomorphism.
\end{proof}

\subsection{Refining the Partition}
\label{sec:refine_partitions}

In the previous step, we decomposed each block of a $(\cD,\cR)$-induced partition of $X$ into components.  
By calling the {\sc Refine} subroutine on each component, we can reconstruct an arboreal network (if one exists). 
For each component, {$C$}, we iterate through the roots above {$C$,}
constructing the tree of descendants.  Given that 
{for an arboreal network $N$ the graph $U(N)^-$}
is a phylogenetic tree, if a set of leaves is below two different roots, then their overlap forms a subtree with identical structure under each root: 

\begin{figure*}[h]
\hrule
\noindent
{\small
{\sc Refine Subroutine:}\\
{\sc Input:} {A triplet system $\mathcal C$ with $m = |L(\mathcal C)|\leq n$}
and an {ancestor (look-up)} table $anc$.\\
{\sc Output:} The arc set, {\em arcs}, of an arboreal network $N$, on $\mathcal L:=L(\mathcal C) \cup values(anc)$ such that $\mathcal C=\mathcal R(N)$, or None if no network exists.
\smallskip
\hrule
\noindent
\begin{tabular}{r p{4.8in} crr}
0.& Initialize table $arcs$ to $\emptyset$.
&& &$O(1)$\\
1.& For each $r$ in the values of $anc$ table:&&&$O(|\mathcal C|)$
\\
2.&\quad Let $L_r = \{l\in {\mathcal L} 
: r \in anc[l]\}$&&{\small $O(m)$
}\\
3.&\quad Run {\sc Build}
on $\mathcal C$ restricted to $L_r$ resulting in {a rooted phylogenetic tree} $T(r)$. &&{\small $O(|\mathcal C|)$}\\
4.&\quad If the root of $T(r)$ does not have out-degree 2, return None.&&$O(1)$\\
5.&\quad Else: Add the arcs of $T(r)$ to $arcs$, merging common arcs.&&{\small $O(|L_r|)$}\\
6.&\quad\quad Add the arc $(r,r')$ to $arcs$ where $r'$ is the root of $T(r)$.&&{\small $O(1)$}\\
7.&Return $arcs.$&&&$O(1)$\\
\end{tabular}
\hrule
}
    \caption{Outline of the {\sc Refine} subroutine with running time complexity $O(|\mathcal C|) = O(m^3)$.
    Bounds on the running time of each line is in the margin  with loops containing the complexity of all nested operations. }
    \label{fig:refine_alg}
\end{figure*}

\begin{lemma}  
{Let $N$ be an arboreal network and let}
    $r$ and $r'$ be roots of $N$ {such that} $L(r) \cap L(r') \ne \emptyset$. Then there exists a vertex, $v$, on the {(undirected)} path from $r$ to $r'$ that is a hybrid below both $r$ and $r'$.  Further, when viewed as phylogenetic trees, the subtree of $N_r$ rooted at {the child $c$ of} $v$ is isomorphic with the subtree of $N_{r'}$ rooted at $c$.
    \label{lem:subtrees-same}
\end{lemma}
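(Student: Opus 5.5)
Since $U(N)$ is a tree, there is a unique undirected path $\pi$ in $U(N)$ joining $r$ and $r'$. We exploit this uniqueness twice: first to locate $v$, then to show that the two subtrees hanging below $v$ agree.

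\emph{Locating $v$.} Choose a leaf $x\in L(r)\cap L(r')$. There is a directed path $P$ from $r$ to $x$ and a directed path $P'$ from $r'$ to $x$. Both are paths in the tree $U(N)$, so they agree from the first vertex $w$ they share all the way down to $x$. Moreover, the segments of $P$ and $P'$ lying above $w$ together form the unique path $\pi$ from $r$ to $r'$. Hence $w$ lies on $\pi$ and is below both $r$ and $r'$.

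The vertex $w$ has two distinct in-neighbours: one on $P$ and one on $P'$. These are distinct because $P$ and $P'$ first meet at $w$, and $w\ne r,r'$ since roots have in-degree~0. Also $w\neq x$, because leaves have in-degree~1. So $w$ is an interior vertex of in-degree at least~2, that is, a hybrid. Set $v:=w$ and let $c$ be its unique child.

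\emph{Identifying the subtrees.} The key step is the following claim: every vertex below $c$ is reachable from $r$ and from $r'$ \emph{only through $v$}. If this holds, then the set of vertices below $c$, together with the arcs among them, is the same sub-mDAG $D_c$ whether it is viewed inside $N_r$ or inside $N_{r'}$. It therefore suffices to show that $D_c$ is a rooted tree, in which case the two subtrees rooted at $c$ are literally identical and hence isomorphic.

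\emph{Showing $D_c$ is a tree, and proving the claim.} This is the main obstacle, because vertices below $c$ may themselves be hybrids with additional parents lying outside $D_c$. Here is how I would handle it.

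\begin{itemize}
\item Suppose some $u$ below $c$ had two distinct directed paths from $c$. Their union would contain a cycle in $U(N)$, which is impossible. So $D_c$ has a unique path from $c$ to each of its vertices, and it is a rooted tree. It becomes a phylogenetic tree after suppressing the in-degree-1, out-degree-1 vertices that arise when extra parents are discarded.
\item For the claim, take any directed path $Q$ from $r$ to a vertex $u$ below $c$. Concatenating the known path $r\to v\to c\to u$ with $Q$ reversed gives a closed walk in $U(N)$. Since $U(N)$ is a tree, uniqueness of paths forces $Q$ to coincide with the path $r\to v\to c\to u$. The same argument applies to $r'$.
\end{itemize}

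I expect the remaining care to be bookkeeping. Extra parents of hybrids inside $D_c$ do not belong to the subtree rooted at $c$ under either root, and they are discarded symmetrically in $N_r$ and $N_{r'}$, so the identity of the two subtrees is preserved.
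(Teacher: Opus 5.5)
Your proof is correct. Its first half follows the paper: both take the first common vertex of the directed paths from $r$ and $r'$ to a shared leaf $x$, and note that it has in-degree at least two. You are more careful than the paper here, since you check explicitly that $w\neq r,r',x$ and that $w$ lies on the $r$--$r'$ path $\pi$; the paper asserts both without argument.

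The second half takes a genuinely different route.
\begin{itemize}
\item The paper takes $K,K'$ to be the subtrees under $c$ in $N_r$ and $N_{r'}$. It asserts $L(K)=L(r)\cap L(r')=L(K')$, and that $\mathcal{R}(K)$ and $\mathcal{R}(K')$ both equal the restriction of $\mathcal{R}(N)$ to this set. It then invokes the theorem that a rooted phylogenetic tree is encoded by its triplets (Semple--Steel, Thm.~6.4.1).
\item You instead observe that the descendant set of $c$ is intrinsic to $N$. Using only that $U(N)$ is a tree, you show it spans a rooted tree $D_c$ that sits canonically, with no choice of spanning tree, inside both $N_r$ and $N_{r'}$. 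So the two subtrees are literally equal.
\end{itemize}

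Your route is more elementary and self-contained, and it gives identity rather than mere isomorphism. It also avoids having to justify that the triplets of $N$ on the overlap coincide with those of $K$, which needs a notion of least common ancestor in a multi-rooted network.

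What the paper's route states and yours does not is that the leaf set under $c$ is exactly $L(r)\cap L(r')$. That is the fact the \textsc{Refine} merging step actually relies on. It follows in one line from your setup: $\pi$ is a directed path down from $r$ to $w$ followed by a directed path up to $r'$, so $w$ is the unique vertex of $\pi$ with both $\pi$-arcs pointing into it. Your construction, applied to any other common leaf, therefore returns the same $w$.
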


\begin{proof}
    Choose some $x\in L:=L(r) \cap L(r') $. Let $P_{rx}$ and $P_{r'x}$ denote the paths in $N$ from $r$ to $x$ and $r'$ to $x$, respectively. Since $x$ is a vertex in both paths, there must exist a vertex $v$ of $N$ that is the first vertex on $P_{rx}$ that also lies on $P_{r'x}$. By construction, $v$ has in-degree at least $2$ and so must be a hybrid of $N$.
    Let $K$ be the subtree of $N_r$ rooted at the child $c$ of $v$ and let $K'$ be the subtree of $N_{r'}$ rooted at $c$, with degree two vertices {suppressed} in each case to yield phylogenetic trees.
    Then $L(K)=L=L(K')$. 
    Since $\cR(K)$ {equals} the restriction of $\cR(N)$ to $L$ and similarly for $\cR(K')$,  $K$ and $K'$ must be isomorphic {phylogenetic trees}
    \cite[Theorem 6.4.1]{SS03}.
\end{proof}

\subsection{Correctness \& Running Time}
\label{sec:correctness}

Using the analysis above, we show that {{\sc ARA}}
is correct and runs in $\Theta(|\cR|) = \Theta(n^3)$.

\begin{lemma}  Let {$N\in \cA(X)$}. 
{Let} $N'$ be the network built by {\sc ARA}
from $\cR(N)$ and $\cD(N)$. {Then} $N \simeq N'$.
\end{lemma}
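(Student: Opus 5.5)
We follow {\sc ARA} step by step and show that each stage reproduces the corresponding piece of $N$. The aim is a network $N'\in\cA(X)$ with $\cR(N')=\cR(N)$ and $\cD(N')=\cD(N)$. Theorem~\ref{the:encoding} then gives $N'\simeq N$ directly. So the real work is showing that {\sc ARA} never returns None on input $(\cR(N),\cD(N))$, and that whatever it returns lies in $\cA(X)$ and induces exactly these two systems.

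\textbf{Line 1.} Apply Proposition~\ref{prop:duet-induced} to $\cR(N)$ and $\cD(N)$. Each duet $\duet{a}{b}$ gives a root of $N$ by Proposition~\ref{prop:duets}, and in $U(N)$ this root separates $a$ from $b$. Every triplet spans leaves lying on one side of each such separation. Hence the queues started at $a$ and $b$ never meet, the procedure does not return None, and its blocks are the unions of leaves on the sides of the duet-roots.

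\textbf{Line 2.} Proposition~\ref{prop:comp_graph} says that $(comps,re)$ is isomorphic to the component graph $G(N)$. In particular the halting condition in Line~6 is never triggered. Moreover, for every leaf $l$, the table $anc[l]$ lists exactly the roots of $N$ above $l$: the edges added in Lines~3 and~12 correspond to roots whose subtrees contain $l$. I will prove this by checking that a triplet $ab|c$ with $a\in C$ and $c\in C'$ certifies a degree-two vertex between $C$ and $C'$ that lies above $a$ and $b$.

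\textbf{Line 3.} Fix a root $r$. The set $L_r$ computed in Line~2 of {\sc Refine} equals $L(r)$. The triplets of $\cR(N)$ restricted to $L(r)$ are precisely $\cR(N_r)$, where $N_r$ is viewed as a rooted phylogenetic tree after suppressing degree-two vertices. Since $N_r$ is encoded by its triplets \cite[Theorem 6.4.1]{SS03}, {\sc Build} returns $T(r)\simeq N_r$. Its root has out-degree $2$ because $r$ does, so Line~4 does not return None. Line~6 then attaches $r$ above $T(r)$.

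The step I expect to be the main obstacle is the merging in Line~5. For two roots $r,r'$ with $L(r)\cap L(r')\neq\emptyset$, Lemma~\ref{lem:subtrees-same} gives a hybrid $v$ whose child subtree is identical in $T(r)$ and $T(r')$. I plan to argue that identifying common arcs glues exactly these subtrees, so that the first shared vertex becomes a hybrid with the correct parents. Stack-freeness is needed here: it ensures that the parents of each hybrid are tree vertices or roots, and that the glued subtree below the hybrid is a genuine tree. Without it, the merge could not detect a chain of hybrids. I also need to check that merging never identifies vertices with different leaf sets. This should follow because arboreal networks have a tree-like $U(N)$, so each cluster has a unique vertex.

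Once the merge is established, the arc set returned is isomorphic to $N$ on each block. The blocks are joined by the duet-roots of Line~3, and this matches $G(N)$ via Proposition~\ref{prop:comp_graph}. As a consistency check, the resulting $N'$ lies in $\cA(X)$ and satisfies $\cR(N')=\cR(N)$ and $\cD(N')=\cD(N)$, so Theorem~\ref{the:encoding} confirms $N'\simeq N$.
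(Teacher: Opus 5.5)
Your stated strategy is the paper's: produce $N'$ with $\cR(N')=\cR(N)$ and $\cD(N')=\cD(N)$, then invoke Theorem~\ref{the:encoding}. However, the step that carries the weight, the merge in Line~5 of \textsc{Refine}, is only announced, and your sketch of it would not work as stated.

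The hybrid $v$ of Lemma~\ref{lem:subtrees-same} has in- and out-degree $1$ in $N_r$, so it is suppressed in $T(r)$. The first vertex shared by $T(r)$ and $T(r')$ is therefore its child $c$. Gluing common arcs gives $c$ in-degree at least $2$ and out-degree at least $2$ (or $0$ if $c$ is a leaf), so $c$ cannot ``become a hybrid''. A new hybrid must be inserted above every vertex that acquires several parents. This is where stack-freeness is really used: one inserted hybrid per glued vertex reproduces $N$ only because no hybrid of $N$ has a hybrid child.

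Moreover, ``each cluster has a unique vertex'' is false, since a hybrid and its child always share a cluster. What you need is that distinct \emph{non-hybrid} vertices of $N$ have distinct clusters, so that identifying vertices of the various $T(r)$ by cluster is correct. This follows from the single-valley argument behind Lemma~\ref{lem:subtrees-same}. If $L(v)\cap L(w)\neq\emptyset$ and neither vertex is above the other, the $v$--$w$ path in the tree $U(N)$ has a unique valley $x$. This $x$ is a hybrid with $L(x)=L(v)\cap L(w)$, and $L(x)$ is a proper subset of the cluster of every non-hybrid vertex strictly above $x$.

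The paper sidesteps vertex-level gluing by arguing triplet by triplet. A triplet below one root is displayed by that root's \textsc{Build} tree and carried into $N'$ by Lemma~\ref{lem:subtrees-same}. A triplet across a root is recovered from the component edges of \textsc{Scaffold*}. The converses come from correctness of \textsc{Build}, and $\cD(N')=\cD(N)$ comes from adding one root per duet.

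Second, your claim that $anc[l]$ is exactly the set of roots above $l$, and hence $L_r=L(r)$, is only half justified. Checking that each recorded edge is a root above $a$ and $b$ gives one inclusion only. Under Line~12 as written the other inclusion fails: a leaf alone on one side of $r$ never lies in the cherry of a triplet crossing $r$. Indeed, the caption of Fig.~\ref{fig:build_alg_ex} gives $L_{r_3}=\{3,\dots,7\}$, omitting the leaf $8$ below $r_3$, and $L_{r_1}=\{3,4,5\}$. So \textsc{Refine} works per component, $T(r)$ is only the part of $N_r$ inside that component, and that is why Line~6 hangs $r$ above it. Under your global reading, Line~6 would instead put a new out-degree-$1$ root above the root of $T(r)\simeq N_r$. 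Either way your Line~3 paragraph must be restated.

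Finally, your last paragraph asserts $N'\in\cA(X)$, $\cR(N')=\cR(N)$ and $\cD(N')=\cD(N)$ without deriving them. If you complete the structural gluing, you get $N'\simeq N$ directly and Theorem~\ref{the:encoding} is superfluous. If you want to use that theorem, these three facts are exactly what remains to be proved.
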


\begin{proof}   
    We rely on Theorem~\ref{the:encoding} that if $\cD(N)=\cD(N')$ and $\cR(N)=\cR(N')$, then {$N$ and $N'$} are isomorphic.
    By Proposition~\ref{prop:duets}, every duet corresponds to a root in $N$.  By construction,  a root 
    is added to $N'$ for each duet, yielding, $\cD(N)=\cD(N')$.
    
    To show that $\cR(N)=\cR(N')$, let $ab|c \in \cR(N)$. {We have two cases. The first case  that} $ab|c$ is {\em underneath} a root -- that is the undirected paths between the three leaves do not cross a root. 
    Let $r$ be that root. Then $a,b,c\in L(r)$. Let $C$ be the leaf set of a connected component of the graph $G_P$ (see Line 7 of the {\sc Scaffold} subroutine) 
    such that $a,b,c\in C$.
    To construct $N'$, the {\sc Refine} subroutine is run on all components.  
    In Line 3, the {\sc Build} algorithm \cite{aho81} is run on $\cR(N)$ 
    restricted to {the set} $L_r$ {defined in that line}, which generates a subtree {$T(r)$ also defined in that line} which induces all triplets {$t\in \cR(N)$ with $L(t)\subseteq C$.} 
    Thus, $ab|c$ is induced by the generated subtree. By Lemma~\ref{lem:subtrees-same}, $ab|c \in \cR(N')$.
    To show the reverse direction for triplets underneath a root, we note that if $ab|c\in\cR(N')$ and below all roots, then it was added to $N'$ in Line 6 of the {\sc Refine} subroutine.  For that to occur, there exists a root $r$ in $N$ such that $a, b, c\in L(r)\cap C$ for {$C$ the leaf set of} a component (see Line 5 of the {\sc Scaffold} subroutine). The {\sc Build} algorithm  on the triplets in 
    $\cR(N)$ underneath $r$ yields $ab|c$.  
    By the correctness of the {\sc Build} algorithm \cite{bryant}, 
    $ab|c \in \cR(N).$ Thus, $\cR(N)=\cR(N')$ in this case.

    {The second case is} that $ab|c\in\cR(N)$ crosses a root, $r$, of $N'$. Then
    there exist components with leaf sets $C$ and $C'$, with $C\ne C'$ and $a,b\in C$ and $c\in C'$ (Line 10 of the {\sc Scaffold} subroutine), and $C$ and $C'$ are both below $r$.  After applying the {\sc Refine} subroutine, each component below a root is a subtree of $U(N')^-$, $a,b\in C$ and $c\in C'$. Thus, $ab|c \in \cR(N')$.
    Lastly, consider the case that $ab|c\in\cR(N')$ and crosses a root, $r$, in $N'$. 
    As before, there exist components with leaf sets $C$ and $C'$, with $C\ne C'$ and $a,b\in C$ and $c\in C'$ (see Line 10 of the {\sc Scaffold} subroutine).  
    Similar arguments as before imply $ab|c\in \cR(N)$. Thus, $\cR(N)=\cR(N')$ holds in this case. 
\end{proof}

\begin{theorem}
    \label{thm:build}
    {Let $\cD$  be a duet system and $\cR$ a triplet system.} 
    {For} $X={ L}(\cD)\cup { L(}\cR)$,  
    we can {decide} in polynomial time in $|X|$, if there exists an arboreal network $N$ on $X$ such that $\cR(N)=\cR$ and $\cD(N)=\cD$.
\end{theorem}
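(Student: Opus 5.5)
The plan is to run {\sc ARA} on the input $(\cD,\cR)$ and then add a verification step: recompute the systems induced by the output and compare them with the input. Three things are needed. The running time must be polynomial. Whenever some $N\in\cA(X)$ with $\cR(N)=\cR$ and $\cD(N)=\cD$ exists, the algorithm must output a network. Any network that is output must be verified to realize $(\cD,\cR)$ exactly. With the verification step in place, a positive answer is always correct, and correctness of the decision reduces to completeness.

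First, the running time. By Proposition~\ref{prop:duet-induced}, Line~1 of {\sc ARA} takes $O(|\cR\cup\cD|)=O(n^3)$ time; if no valid partition exists it returns None, which is correct. The acyclicity and connectivity check of $\cD$ on the blocks takes $O(n^2)$. The {\sc Scaffold} subroutine (Fig.~\ref{fig:scaffold_alg}) costs $O(|\cD|+|\cR|)=O(n^3)$. For {\sc Refine}, each {\sc Build} call is polynomial~\cite{aho81}, and there are at most $O(n)$ roots: a root corresponds to an edge of the tree $U(N')^-$, which has $O(n)$ edges. So {\sc Refine} is polynomial overall, even if each iteration is bounded only crudely by $O(n^4)$. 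The final verification step computes $\cR(N')$ and $\cD(N')$ by brute force over all $O(n^3)$ triples and $O(n^2)$ pairs. For each, we test reachability from the roots and the position of least common ancestors in a graph with $O(n)$ vertices, which is polynomial. We also check that $N'$ is arboreal and stack-free, and that every root has out-degree two.

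Second, completeness. Suppose $N\in\cA(X)$ realizes $(\cD,\cR)$. By the preceding lemma, {\sc ARA} applied to $\cR(N)=\cR$ and $\cD(N)=\cD$ outputs some $N'\simeq N$. Then $N'$ passes verification, and we answer yes. If no such $N$ exists, then either the algorithm returns None, or its output fails verification and we answer no. Theorem~\ref{the:encoding} additionally shows that the realizing network, when it exists, is unique up to isomorphism.

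The main obstacle is the phrase ``arboreal network'' in the statement. The machinery above, namely Theorem~\ref{the:encoding} and the correctness lemma, is proved only for stack-free arboreal networks, that is, for $\cA(X)$. I would handle this by observing that any arboreal network can be made stack-free with no change to its duets and triplets. Contract each stacked arc, merging a hybrid with its hybrid child, and delete redundant parallel arcs. This preserves reachability from the roots, so it preserves $L(r)$ for every root $r$. It also preserves least common ancestors relative to tree vertices, so it preserves $\cR$, and it preserves the degree-two-vertex structure on leaf paths relevant to $\cD$; Fig.~\ref{fig:stackfree-necessary}(e,f) illustrates exactly this. Hence a realizing arboreal network exists if and only if one exists in $\cA(X)$. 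The delicate part of this step is checking that contraction preserves the ``precisely one degree-two vertex'' condition in the definition of a duet.
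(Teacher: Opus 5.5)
Your proposal is correct. Its core matches the paper's: run {\sc ARA} and observe that every line is polynomial. The paper's proof, however, consists only of the line-by-line running-time bounds. Correctness is left implicit in the preceding lemma, and that lemma gives only completeness (a realizer gets reconstructed), and only for networks in $\cA(X)$.

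Your proposal closes both of these gaps.
\begin{itemize}
\item The final verification step makes a ``no'' answer sound without a separate proof that {\sc ARA} returns None on unrealizable input. That direction is never argued in the paper.
\item Contracting stacked arcs shows that an arboreal realizer exists if and only if one exists in $\cA(X)$. The theorem therefore holds as literally stated, for all arboreal networks rather than only stack-free ones.
\end{itemize}

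The step you flag as delicate is in fact immediate. In an $m$-network the degree-two vertices of $U(N)$ are exactly the roots, since hybrids and tree vertices have degree at least $3$. Contraction never touches a root, so every leaf-to-leaf path crosses the same roots before and after. Moreover, contraction cannot create parallel arcs: a hybrid and its hybrid child with a common parent would form a triangle in $U(N)$.

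One small imprecision concerns the running time. Bounding the number of roots by the edges of $U(N')^-$ presupposes that the output is arboreal, which need not hold on arbitrary input. Use instead that {\sc Scaffold} creates at most $|\cD|$ root edges plus one per pair of components in a block, so $O(n^2)$ in total. This still keeps {\sc Refine}, and hence the whole procedure, polynomial.
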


\begin{proof}
The running time complexity is given line-by-line in Fig.~\ref{fig:build_alg}, \ref{fig:scaffold_alg}, and \ref{fig:refine_alg} and is polynomial in $|X|$.
\end{proof}

\section{A {Metric for Stack-free} Arboreal Networks}
\label{sec:metric}

Since triples are not enough to {encode} a 
network {in $\cA(X)$} but triples and duets are (Theorem~\ref{the:encoding}), we {obtain} a  metric  {on $\cA(X)$} that {canonically} extends the popular triplet distance \cite{critchlow1996triples} for rooted phylogenetic trees. 
{More precisely,  we} define the {{\bf duet-triplet distance}}
$d(N,N')$ {between networks $N, N'\in \cA(X)$}
as  
$|(\cD(N)\cup \cR(N))\Delta (\cD(N')\cup \cR(N'))|$ {where $\Delta$ is the symmetric distance (see Fig.~\ref{fig:absense-of-evidence} for examples)}.

To see that 
{our distance} 
is a metric, {note that} non-negativity, symmetry, and {the} triangle inequality follow directly from the symmetric difference.  For the identity condition, note that by the symmetric difference,  $d(N,N') = 0$  if and only if 
$\cD(N) = \cD(N')$ and 
$\cR(N) = \cR(N')$.
{Since, by} Theorem~\ref{the:encoding}, a network {$M\in\cA(X)$ is encoded 
by $\cR(M)\cup \cD(M)$}
{it follows that}  $d(N,N') = 0$  if and only if $N\simeq N'$.  
{We leave as an open question: what is the diameter of $\cA(X)$ under the duet-triplet distance?}

\section{Conclusion}

Arboreal networks are multi-rooted structures whose underlying graph is a tree.
We give an elegant characterization of stack-free arboreal networks in terms of the rooted 2-sets and 3-sets that they induce.  This characterization allows for {an} algorithm for reconstructing such networks with running time linear in the number of inputted duets and triplets, as well as an intuitive metric to compare the networks. We leave as a {further} open question:  can we bound the number of the triplet/duets needed to reconstruct a stack-free arboreal network up to isomorphism? 

\smallskip
\smallskip
\noindent{\bf Acknowledgements:}
The authors thank {the referees for their helpful comments. They also thank the} Institute for Computational and Experimental Research in Mathematics (ICERM) in Providence, RI (Fall 2024),
the Center for Interdisciplinary Research (ZiF), University Bielefeld, Germany (Summer 2025),
and
the Banff International Research Station (BIRS) 
(August 2025) for hosting them.

\smallskip
\noindent
{\bf Data Availability:}  No data was used.

\smallskip
\noindent
{\bf CRediT Statement:}  The authors both contributed equally to the conceptualization, the methodology, and the writing.

\small
\bibliographystyle{acm}

\bibliography{references}

\end{document}